  \def\({}%
  \def\){}%
  \def\\{}%
  \def\infty{\042\036}%
\newcommand{\abs}[1]{\lvert#1\rvert}
\theoremstyle{plain}
        \newtheorem{theorem}{Theorem}[section]
\newtheorem{definition}[theorem]{Definition}
        \newtheorem{proposition}[theorem]{Proposition}
        \newtheorem{corollary}[theorem]{Corollary}
        \newtheorem{remark}[theorem]{Remark}
\numberwithin{equation}{section}
\DeclareMathAlphabet\mathbfcal{OMS}{cmsy}{b}{n}
\newtcolorbox{bothok}{%
     enhanced, breakable, size=minimal, parbox=false, after={\par}, 
     before upper={\indent}, colback=white, 
     overlay = {\draw[line width=2pt] (frame.north east) -|
                       ([xshift=3mm]frame.east)|-(frame.south east);},
     overlay first={\draw[line width=2pt] (frame.north east) -|
                           ([xshift=3mm]frame.south east);},
     overlay middle={\draw[line width=2pt] ([xshift=3mm]frame.north east) -- 
                              ([xshift=3mm]frame.south east);},
     overlay last={\draw[line width=2pt] ([xshift=3mm]frame.north east)|-
                          (frame.south east);},
}
\newtcolorbox{blfok}{%
     enhanced, breakable, size=minimal, parbox=false, after={\par}, 
     before upper={\indent}, colback=white, 
     overlay = {\draw[dotted, line width=2pt] (frame.north east) -|
                       ([xshift=3mm]frame.east)|-(frame.south east);},
     overlay first={\draw[dotted, line width=2pt] (frame.north east) -|
                           ([xshift=3mm]frame.south east);},
     overlay middle={\draw[dotted, line width=2pt] ([xshift=3mm]frame.north east) -- 
                              ([xshift=3mm]frame.south east);},
     overlay last={\draw[dotted, line width=2pt] ([xshift=3mm]frame.north east)|-
                          (frame.south east);},
}
\newtcolorbox{plfok}{%
     enhanced, breakable, size=minimal, parbox=false, after={\par}, 
     before upper={\indent}, colback=white, 
     overlay = {\draw[dashed, line width=2pt] (frame.north east) -|
                       ([xshift=3mm]frame.east)|-(frame.south east);},
     overlay first={\draw[dashed, line width=2pt] (frame.north east) -|
                           ([xshift=3mm]frame.south east);},
     overlay middle={\draw[dashed, line width=2pt] ([xshift=3mm]frame.north east) -- 
                              ([xshift=3mm]frame.south east);},
     overlay last={\draw[dashed, line width=2pt] ([xshift=3mm]frame.north east)|-
                          (frame.south east);},
}
\newcommand \brho {\underline \rho}
\newcommand \bv {\underline v}
\newcommand \bSigma {\underline \Sigma}
\newcommand \barU {\underline U}
\newcommand \barA {\underline A}
\newcommand \bara {\underline a} 
\newcommand \barnu {\underline \nu} 
\newcommand \Psharp {P^\sharp}
\newcommand \Qsharp {Q^\sharp}
\newcommand \et {\widetilde e}
\newcommand \la 	\langle
\newcommand \ra 	\rangle 
\newcommand \barlambda {\underline \lambda}
\newcommand \barmu {\underline \mu}
\newcommand \dbf 	{{\mathbf d}}
\newcommand \Ncal 	{\mathcal N}
\newcommand \Vcal 	{\mathcal V}
\newcommand \Vcalt 	{\widetilde{\mathcal V}}  
\newcommand \Gcal 	{\mathcal G}
\newcommand \Ecalt 	{\mathcal {\widetilde E}}
\newcommand \Pn  	{P^n}
\newcommand \Qn  	{Q^n}
\newcommand \Pinfty  {P^\infty}
\newcommand \Qinfty  {Q^\infty}
\newcommand \barP 	{\underline P}
\newcommand \barQ 	{\underline Q} 
\newcommand \bei 	{\begin{itemize}}
\newcommand \eei 	{\end{itemize}}
\newcommand \del	 \partial
\newcommand \auth 	{\textsc}   
\newcommand \Hcal 	{\mathcal H}   
\newcommand \Lcal 	{\mathcal L}   
\newcommand \Ecal 	{\mathcal E}
\newcommand \loc   	{\textnormal{loc}} 
\newcommand \RR 	{\mathbb R}   
\newcommand \eps 	\epsilon  
\newcommand \be 	{\begin{equation}}
\newcommand \ee 	{\end{equation}} 
\newcommand \bel 	{\be \label}
\let\oldmarginpar\marginpar
\renewcommand\marginpar[1]{\ifhmode\unskip\fi\- \oldmarginpar[\raggedleft\footnotesize #1]%
{\raggedright\footnotesize #1}}
\newcommand \lam 	\lambda
\newcommand \tmax  {t_\text{max}} 
\newcommand \tmin  {t_\text{min}}
\newcommand{\WARNONCE}{\gdef\WARNONCE{}\GenericError{}{There are still BLF/PLF comments in this file ! \space}{}{}}
\newcommand \Sb 	{\overline S}
\newcommand \Pb 	{\overline P}
\newcommand \psib       {\overline \psi}
\newcommand \lambdab    {\overline \lambda}
\renewcommand \th 	\theta 
\newcommand \sign 	{\operatorname{sign}}
\renewcommand \geq \geqslant
\renewcommand \leq \leqslant
\newcommand \bea  {\begin{eqnarray}}
\newcommand \eea  {\end{eqnarray}}
\newcommand \lbrac \llbracket 
\newcommand \rbrac \rrbracket 
\newcommand \R    {\mathbb{R}}
\newcommand \HoneWoneone {\texorpdfstring{$H^1$-$W^{1,1}$}{H\textonesuperior\ - W\textonesuperior\ \textonesuperior}}
\begin{document}

\title{On the global evolution of self-gravitating matter.
\\
Nonlinear interactions in Gowdy symmetry 
\footnotetext{$^1$ Princeton Center for Theoretical Science (PCTS), Princeton University, Princeton, New Jersey 08544, USA. Email: {\tt blefloch@princeton.edu}.
\\
$^2$ Laboratoire Jacques-Louis Lions, Centre National de la Recherche Scientifique, 
Sorbonne Universit\'e,  
4 Place Jussieu, 75252 Paris, France. Email: {\tt contact@philippelefloch.org}. 
\\
{\it Key Words.} Hyperbolic balance laws; relativistic Euler equations; Gowdy symmetry; energy functional; nonlinear stability. 
\textit{\ AMS Class.}    35L65, 76L05,  83C05.
\\
Completed on: May 25, 2018. To appear in: Archive for Rational Mechanics and Analysis. 
}} 
\author{Bruno Le Floch$^1$ and Philippe G. LeFloch$^2$   
}

\date{} 

\maketitle

\begin{abstract}
We are interested in the evolution of a compressible fluid under its self-generated gravitational field.  Assuming here Gowdy symmetry, we investigate the algebraic structure of the Euler equations satisfied by the mass density and velocity field. We exhibit several interaction functionals that provide us with a uniform control on weak solutions in suitable Sobolev norms or in bounded variation.  These functionals allow us to study the local regularity and nonlinear stability properties of weakly regular fluid flows governed by the Euler-Gowdy system. In particular for the Gowdy equations, 
we prove that a spurious matter field arises under weak convergence, and we establish the nonlinear stability of weak solutions.
\end{abstract}

\vfill

\setcounter{secnumdepth}{6}
\setcounter{tocdepth}{1} 
\hypersetup{bookmarksdepth=2}

\tableofcontents  

\

\

\newpage 


\section{Introduction}
\label{sec:intro}

This is the first of a series of papers \cite{LeFlochLeFloch-2, LeFlochLeFloch-3}, where we study the global dynamics of matter fields, especially the evolution of perfect compressible fluids, when self-gravitating effects are taken into account. In particular, we are interested in weak solutions to the relativistic Euler equations of continuum physics and in  developing mathematical techniques relevant for a variety of astrophysical and cosmological problems. We intend to bridge together methods of nonlinear analysis ---that have been introduced for the study of nonlinear hyperbolic conservation laws--- and mathematical methods of geometric analysis relevant for gravitational problems. 

The Euler equations for compressible fluids, especially in the relativistic setting under consideration, are a mathematically challenging system of \textsl{nonlinear hyperbolic conservation laws}\footnote{actually, a system of balance laws}, which accurately describes a broad set of complex wave phenomena observed in fluids. 
Progress on the mathematical analysis of solutions of the Euler equations, especially weak solutions with shock waves, often relies on making first 
certain symmetry assumptions, such as planar or spherical symmetry. As we will show below, the equations we consider in the present paper, while being significantly more involved, are similar to the equations of a compressible fluid evolving in a nozzle with variable cross-section. 

Hence, we assume here a symmetry, specifically Gowdy symmetry \cite{Gowdy-1974,Moncrief-1981}, on the self-gravitating fluid flow of interest, and in Section~\ref{sec:15} we introduce the \textsl{Euler-Gowdy system}, as we call it.
Having in mind an audience with a background on nonlinear hyperbolic conservation laws ---as described in Dafermos' reference textbook \cite{Dafermos-book} as well as \cite{Lax-1957,Glimm-1965,Lax-1971,LeFloch-book}--- we intentionally postpone most of the geometry-related material to a follow-up paper.  

We focus here on a major issue, i.e.~the derivation of \textsl{a priori~bounds} for weakly regular solutions to the Euler-Gowdy system. Namely, we  exhibit several energy-type functionals that allow us to investigate the stability of weak solutions under strong or weak convergence. These functionals provide a uniform control on weak solutions with Sobolev or bounded variation regularity. Interestingly, we discover here that, even when the matter density vanishes identically, a spurious matter field may arise under weak convergence. The follow-up paper \cite{LeFlochLeFloch-2} will rely on these results and establish a mathematical theory for the initial value problem associated with the Euler-Gowdy system. 

Recall that all classical works on the Gowdy equations (for instance \cite{Moncrief-1981}) assumed the metric to be of class $C^2$, and it is only in recent years that the study at a weak regularity level was studied in \cite{Barnes-2004,GrubicLeFloch-2013, GrubicLeFloch-2015,LeFlochRendall-2011, LeFlochSmulevici-2016,LeFlochStewart-2011}. We build here on this earlier study while providing significantly new insights on the regularity, integrability, and nonlinear stability properties of Gowdy-symmetric solutions. We recall that the construction of weak solutions (with bounded variation) to the equations of self-gravitating matter was achieved first by Christodoulou \cite{Christo-1992,Christo-1995} for a rather different matter model (i.e. a scalar field) and under the assumption of spherical symmetry.  We also refer to Rendall \cite{Rendall-book} for further related background. 

This paper is organized as follows.
In Section~\ref{sec:15} we present the Euler-Gowdy system together with two energy-like functionals that provide a control of the fluid variables as well as the geometric variables. 
In Section~\ref{sec:3} we investigate several notions of weak solutions and functionals for an ``essential'' nonlinear wave system extracted from the full model;
we observe that these essential equations only involve the fluid density and can be analyzed first independently from the fluid equations.  
We return in Section~\ref{sec:4} to the Euler-Gowdy system:
first, by considering spatially homogeneous solutions we observe that (even weak) solutions are not necessarily defined for ``all'' times; then we present our functional for the Euler-Gowdy system and the corresponding a priori bound for weak solutions. In Section~\ref{sec:5} we show that our functional can be used to control a suitable distance between two Gowdy solutions (whereas this issue for fluids is more involved and is postponed to \cite{LeFlochLeFloch-2}).
We end our investigation with several alternative classes of (Sobolev, bounded variation) regularity in Section~\ref{sec:6}.
For completeness we summarize the derivation of the Euler-Gowdy model in Appendix~\ref{sec:7}.


\section{The Euler-Gowdy system}
\label{sec:15} 

\subsection{The nonlinear hyperbolic system of interest}

With the above motivations in mind, we assume Gowdy symmetry and detail here the resulting \textsl{Euler-Gowdy system} (cf.~\cite{LeFlochRendall-2011} or Appendix~\ref{sec:7} for a derivation).
The unknowns are fluid variables $(\rho,v)$ and geometric variables $(U,A,a,\nu)$ depending upon a time coordinate denoted by $t \neq 0$ 
and a space coordinate $\theta \in T^1 = [0, 1]$ (the one-torus, i.e. with the standard periodic boundary condition).  
We denote partial derivatives with respect to the variables $t$ and $\theta$ using subscripts: $(\ )_t$ and $(\ )_\theta$.
The speed of sound $k\in[0,1]$, normalized by the speed of light, is a prescribed constant.

The \textsl{Euler equations} for a relativistic fluid with mass density $\rho=\rho(t,\theta) \geq 0$ and normalized velocity $v=v(t,\theta) \in (-1,1)$ read
\bel{fluid-II}
\aligned
& \Bigg( t  \, {\rho \over a} \, \frac{1+k^2v^2}{1 - v^2}\Bigg)_t
+ \Bigg( t \, \rho v \, \frac{1+ k^2}{1 - v^2} \Bigg)_\theta
=   
{\rho \over a} \, \Big( - k^2 + (1-k^2) \, \Sigma_0 \Big), 
\\
& \Bigg( t \, {\rho v \over a} \, \frac{1+k^2}{1 - v^2}\Bigg)_t 
+ \Bigg(t \, \rho \, \frac{k^2 + v^2}{1 - v^2} \Bigg)_\theta
= - \rho \, (1-k^2) \, \Sigma_1,
\endaligned
\ee
in which
\be
\aligned 
\Sigma_0  = \Sigma_0[t, U, A, a]  &\coloneqq   - t \, U_t + t^2 \, \big( U_t^2 + a^2U_\theta^2 \big) + {1 \over 4} e^{4U} (A_t^2 + a^2A_\theta^2),
\\
\Sigma_1 = \Sigma_1[t, U, A, a] &\coloneqq - t \, U_\theta + 2t^2 \, U_t \, U_\theta + {1 \over 2} e^{4U} \, A_t A_\theta.
\endaligned
\ee

On the other hand, the unknowns $U=U(t, \theta) \in \RR$ and $A = A(t, \theta) \in \RR$
 are determined from the 
second-order nonlinear wave equations ---referred to below as the \textsl{essential Gowdy equations}--- 
\bel{evolution-I}
\aligned
\Bigg( t \, {U_t \over a} - {1 \over 2 a} \Bigg)_t  
- \big( t \, a U_\theta \big)_\theta 
& = {1 \over 2ta} e^{4U} \, \big( A^2_t - a^2A^2_\theta \big), 
\\
\Bigg( {A_t \over t \, a} \Bigg)_t - \Bigg( {a \, A_\theta  \over t } \Bigg)_\theta 
& = -\frac{4}{ta} \, \big( U_tA_t - a^2U_\theta A_\theta \big),
\endaligned
\ee
together with an ordinary differential equation for the coefficient $a=a(t, \theta)> 0$: 
\bel{constraint-II}
a_t = - t (1 - k^2) \, a \, \rho. 
\ee

Furthermore, the full model of interest contains an additional unknown $\nu=\nu(t,\theta) \in \RR$ whose evolution is
\textsl{decoupled} from the above system, i.e. 
\bel{evolution-II}
\Bigg( t \, {\nu_t \over a} + t^2 \, (1-k^2) {\rho \over a} \Bigg)_t 
    - \big( t \, a \nu_\theta \big)_\theta 
= 2at \, U_\theta^2 + {1 \over 2at} e^{4U} A_t^2 + t \, {\rho \over a} \, \frac{1-k^2+2k^2v^2}{1 - v^2}, 
\ee
supplemented with the constraints 
\bel{constraint-III}
\aligned 
& \nu_t  = U_t + {\Sigma_0 \over t}  
+ t\rho \, \frac{k^2 + v^2}{1 - v^2},
\\
& \nu_\theta  = U_\theta + {\Sigma_1 \over t}  
                             - t \, {\rho \over a} \, \frac{(1+k^2)v}{1 - v^2}. 
\endaligned
\ee

This completes the description of the \textsl{Euler-Gowdy system}, as we call it here. 
We refer to \eqref{fluid-II}-\eqref{evolution-I}-\eqref{constraint-II} as the ``essential Euler-Gowdy equations'', which suggest us to solve first for the fluid unknowns $\rho, v$ and the coefficients $U, A, a$. Next, we find the unknown $\nu$ from the remaining equation \eqref{evolution-II}, while \eqref{constraint-III} can be treated as constraints 
that can be ``propagated'' and need to be checked on the initial data only.

We emphasize that the right-hand sides of the equations \eqref{fluid-II} generalize the standard Euler equations for relativistic fluids, a system that has been studied extensively in, for instance, Smoller and Temple \cite{SmollerTemple-1993} and Makino and Ukai \cite{MakinoUkai-1995}. Namely, this is clear if, in \eqref{fluid-II}, we formally replace both the coefficient $a$ 
and the variable coefficient $t$ by $1$.


\subsection{Two energy functionals} 

The system under consideration admits two functionals
\be
\aligned
E_1(t) &\coloneqq \int_{T^1}  e \, d\theta
\qquad \text{and} \qquad
E_2(t) \coloneqq \int_{T^1}  \big( e + T \big) \, d\theta
\endaligned
\ee
with
$$
\aligned
e & \coloneqq \frac{1}{at^2}\bigg(\Sigma_0+\frac{1}{4}\bigg)
= {1 \over a} \, \Bigg( \big(U_t - \frac{1}{2t}\big)^2 + a^2U_\theta^2  + \frac{e^{4U}}{4t^2}\big(A_t^2 + a^2 A_\theta^2\big)\Bigg)
\geq 0,
\\
T & \coloneqq {\rho \over a} \, \frac{1+k^2v^2}{1 - v^2}
\geq 0.
\endaligned
$$
These density functions $e$ and $e + T$ obey \textsl{balance laws of energy}.
Those associated with~$E_1$ are
\bel{eq:use0}
\aligned
e_t + \big( af\big)_\theta 
& =  - {2 \over t} \, \et + \frac{a_t}{a}e, 
\\
f_t + \big( a e \big)_\theta 
& = - \frac{f}{t} + \frac{a_t}{a} f,
\endaligned
\ee
in terms of
\bel{eq:deffandet}
\aligned
f & \coloneqq -\frac{1}{t^2}\Sigma_1 = - 2 \, \big(U_t - \frac{1}{2t}\big)U_\theta  - \frac{e^{4U}}{2t^2}A_tA_\theta,
\\
\et & \coloneqq {1 \over a} \, \Bigg( \big(U_t - \frac{1}{2t}\big)^2  + \frac{e^{4U}}{4t^2} A_t^2 \Bigg).
\endaligned
\ee
In \eqref{eq:use0}, we can eliminate $a_t$ and write 
\bel{eq:use0-rho}
\aligned
e_t + \big( af\big)_\theta 
& =  - {1 \over t} \Big( 2 \, \et + (1-k^2) \, t^2 \rho \, e \Big), 
\\
f_t + \big( a e \big)_\theta 
& = - \frac{f}{t} \Big( 1 + (1-k^2) \, t^2 \rho \Big). 
\endaligned
\ee

Next, introducing
$$
M \coloneqq \frac{\rho}{a} \, {(1+k^2)v \over 1 - v^2},
\qquad
S \coloneqq \frac{\rho}{a} \, \frac{k^2 + v^2}{1 - v^2} = T + \frac{a_t}{t\,a^2},
$$
we also have similar equations associated with $E_2$: 
\bel{eq:use}
\aligned
(e + T)_t + \big( a (f + M) \big)_\theta 
& = - \frac{1}{t}\biggl( 2 \, \et + T + (3k^2 + 1){\rho \over 4a}\biggr),
\\
(f + M)_t + \big( a (e + S) \big)_\theta 
& = - \frac{1}{t}(f + M).
\endaligned
\ee

Both functionals turn out to enjoy monotonicity properties (i.e.~to be non-increasing or non-decreasing depending on the sign of $t$), as is clear from
\bel{der1-II}
\aligned
\frac{d E_1}{dt} (t) 
& =  - {1 \over t} \int_{T^1} \Big( 2 \, \et + (1-k^2) \, t^2 \rho \, e \Big) \, d\theta,
\\
 \frac{d E_2}{dt} (t) 
& = -\frac{1}{t}\int_{T^1} \Bigg( 2 \, \et+ T + (3k^2 + 1) {\rho \over 4 a} 
\Bigg)
\, d\theta.
\endaligned
\ee
Similar statements can be written by introducing a time-dependent weight. For instance,
\bel{der2-II-weight}
\aligned
 \frac{d}{dt} \Big( t^2 E_2(t) \Big)
= t \int_{T^1}\Bigg(
2a\,\bigg(U_\theta^2+\frac{e^{4U}}{4t^2}A_\theta^2\bigg)
+{\rho \over 4 a (1-v^2)} \Big( 3(1-k^2) + (7k^2+1) v^2 \Big)\Bigg)
\, d\theta.
\endaligned
\ee
 
For weak solutions, all of the above equations should be understood in the sense of distributions. We are interested in the initial value problem with data prescribed at some constant time $t_0 \neq 0$ and it is natural to distinguish the cases $t\gtrless 0$. 
Later in Section~\ref{sec:42fluid} we will see that the coefficient $a> 0$ may blow up in finite time, so that
(weak) solutions may exist only on a bounded interval of time. By imposing that the initial data have finite energy, we now arrive at the following result.

\begin{proposition}[Energy functionals for the Euler-Gowdy system]
Consider weak solutions to the Euler-Gowdy system and denote by $\tmax$ the maximal time of existence within which the function $a$ remains positive and locally bounded. 

\bei 

\item For solutions defined within a time domain $[t_0, \tmax) \subset [t_0, +\infty)$ corresponding to the initial value problem with initial data prescribed at a time $t_0> 0$, both energy functionals $E_1$ and $E_2$ are non-increasing
within the time interval $[t_0, \tmax)$. 

\item For solutions defined within a time domain $t \in [t_0, \tmax) \subset [t_0, 0)$ corresponding to the initial value problem with initial data prescribed at a time $t_0<0$, the weighted energy functional $t^2 E_2$ is non-increasing
within the time interval $[t_0, \tmax)$ while $E_1$ is non-decreasing but bounded above by $E_2$. 

\eei
\noindent Consequently, the following norm of the solutions is controlled within their interval of definition: 
\be
\aligned
\int_{T^1} \Bigg( 
{\rho \over 1 - v^2}
+
U_t^2 + a^2 U_\theta^2  + e^{4U} \big(A_t^2 + a^2 A_\theta^2\big)\Bigg) 
\, {d \theta \over a} 
\endaligned
\ee
and, in particular, the total mass density of the fluid is uniformly controlled within $[t_0, \tmax)$. 
\end{proposition}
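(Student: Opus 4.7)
The plan rests entirely on the two identities \eqref{der1-II} and \eqref{der2-II-weight} derived just above: the argument reduces to tracking signs of factors and then reading off the announced norm from the definitions of $e$ and $T$. For the first item ($t_0 > 0$), I observe that in \eqref{der1-II} the prefactor $-1/t$ is negative while each term in the two integrands is manifestly non-negative---since $\et \geq 0$ by definition, $\rho \geq 0$, $e \geq 0$, $T \geq 0$, $0 \leq k \leq 1$, and $(3k^2+1)/(4a) > 0$---so $dE_1/dt \leq 0$ and $dE_2/dt \leq 0$ throughout $[t_0, \tmax)$, which gives the claimed monotonicity.

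For the second item ($t_0 < 0$), the same sign inspection applied to the $E_1$ identity in \eqref{der1-II} now yields $dE_1/dt \geq 0$ since the prefactor $-1/t$ is positive. For the weighted quantity $t^2 E_2$ I invoke \eqref{der2-II-weight} and notice that the integrand there is non-negative---the coefficient $3(1-k^2) + (7k^2+1) v^2 \geq 0$ because $k \in [0,1]$---while the prefactor $t$ is negative, so that $t^2 E_2$ is non-increasing. The inequality $E_1 \leq E_2$ is immediate, as $E_2 - E_1 = \int_{T^1} T \, d\theta \geq 0$.

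For the controlled norm, I extract each piece from the definitions of $e$ and $T$. From $T \geq \rho/[a(1-v^2)]$ (using $1 + k^2 v^2 \geq 1$), the $E_2$ bound (or, when $t_0 < 0$, the $t^2 E_2$ bound combined with the fact that $t$ stays away from $0$ on any compact subinterval of $[t_0, \tmax)$) controls $\int_{T^1} \rho/[a(1-v^2)] \, d\theta$, and in particular the total mass density $\int_{T^1} \rho/a \, d\theta$. From the definition of $e$, the terms $a U_\theta^2$ and $e^{4U}(A_t^2 + a^2 A_\theta^2)/(4 t^2 a)$ are controlled directly; the passage from $(U_t - 1/(2t))^2/a$ to $U_t^2/a$ uses the elementary Young bound $U_t^2 \leq 2(U_t - 1/(2t))^2 + 1/(2t^2)$, yielding $\int_{T^1} U_t^2/a \, d\theta \leq 2\int_{T^1}(U_t - 1/(2t))^2/a \, d\theta + (2t^2)^{-1}\int_{T^1} d\theta/a$, where the remaining integral is locally bounded since $a$ is positive and locally bounded on $[t_0, \tmax)$ by hypothesis.

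The only substantive obstacle is the sign bookkeeping when $t < 0$: the naive monotonicity of $E_2$ fails in that regime, and the weight $t^2$ must be introduced to recover a monotone functional; this is precisely the role of \eqref{der2-II-weight}. Once that identity is in hand, all remaining steps reduce to observing the non-negativity of the summands already exhibited in \eqref{der1-II}--\eqref{der2-II-weight} and the elementary Young-type estimate above, none of which requires any further input from the Euler-Gowdy structure.
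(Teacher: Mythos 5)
Your argument is correct and is essentially the paper's own (implicit) proof: the proposition is read off directly from the sign of the prefactors and the non-negativity of the integrands in \eqref{der1-II} and \eqref{der2-II-weight}, together with $E_2-E_1=\int_{T^1} T\,d\theta\geq 0$ and the extraction of the stated norm from the definitions of $e$ and $T$ (with $1+k^2v^2\geq 1$ and the Young-type absorption of the shift $1/(2t)$). Two small touch-ups: the bound on $\int_{T^1} a^{-1}\,d\theta$ does not follow from $a$ being positive and bounded above, so invoke instead the paper's standing assumption that $1/a$ is locally bounded (or the Remark's identity, which controls $\|1/a\|_{L^1}$ from the energy); and the $A$-terms in the stated norm carry no $1/(4t^2)$ weight, so they are controlled from $e$ only after multiplying by $4t^2$, which is harmless on compact subintervals of $[t_0,\tmax)$.
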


\begin{remark} 
Concerning \eqref{eq:use}, we observe that 
$$
- (3k^2 + 1){\rho \over 4at}
= {1 + 3k^2 \over 1 - k^2} {a_t \over 4 t^2 a^2} 
= {1 + 3k^2 \over 1 - k^2} \Bigg( \bigg(- {1 \over 4t^2a} \bigg)_t - {1 \over 2 t^3 a} \Bigg), 
$$
so that the first equation in \eqref{eq:use} admits the alternative form
\be
\Bigg(
e + T + {1 + 3k^2 \over 1 - k^2} {1 \over 4t^2a} \Bigg)_t
+ \big( a (f + M) \big)_\theta 
 = - \frac{1}{t} \Bigg(
2 \, \et + T + {1 + 3k^2 \over 1 - k^2} {1 \over 2 t^2 a} 
\Bigg). 
\ee
This identity allows us to also control the $L^1$ norm of the function $1/a$. 
\end{remark}


\subsection{The structure of the Gowdy equations} 
\label{sec:GowdyEqua}

We consider first the Gowdy equations \eqref{evolution-I} and \eqref{constraint-II}, which we find it convenient to rewrite in terms of the new unknowns
\be
P \coloneqq 2 U - \ln t, \qquad Q \coloneqq A. 
\ee
We obtain the two coupled wave equations
\bel{33-eq2} 
\aligned
\Bigg( t \, {P_t \over a}\Bigg)_t  - \big( t a P_\theta \big)_\theta 
& = {t\,e^{2P} \over a} \, \big( Q^2_t - a^2 Q^2_\theta \big), 
\\
\Bigg( t \, {Q_t \over a} \Bigg)_t - \big( t \, a \, Q_\theta \big)_\theta 
& = 
 -\frac{2t}{a} \, \big( P_t Q_t - a^2 P_\theta Q_\theta \big),
\endaligned
\ee
in which the function $a$ is determined from the fluid density $\rho$ by the ordinary differential equation
\bel{constraint-II-PQ}
{a_t \over a}= - t (1 - k^2) \, \rho. 
\ee
Denoting by $a_0$ the value of $a$ at some initial time $t_0 \neq 0$, we can write 
\be
a(t,\theta) = a_0(\theta) \, e^{- (1-k^2) \int_{t_0}^t s\, \rho(s, \theta) \, ds}, \qquad 
t \in [t_0, \tmax), \, \theta \in T^1.
\ee  
This is a system of two coupled nonlinear wave equations, which can at least be solved locally in time once sufficiently smooth initial conditions are prescribed at $t_0$.  Observe that the right-hand sides of \eqref{33-eq2} are 
null terms\footnote{We call here ``null terms'' the quadratic expressions $(P_t^2 - a^2 P_\theta^2)$, 
$(Q_t^2 - a^2 Q_\theta^2)$, and $(P_t Q_t - a^2 P_\theta Q_\theta)$, where the factors $a^2$ are due to the non-trivial metric. Such terms are known to have good regularity and decay properties.}. 

\bei 

\item Once the metric coefficients $P, Q$ are known, the third metric coefficient  can be determined. It is convenient to define 
\be
\lambda \coloneqq 4 (\nu - U) + \ln t, 
\ee
and from \eqref{constraint-III} we see that $\lambda$ is obtained from 
\bel{33-weakconstraints-rho}
\aligned
{1 \over t} \lambda_t 
& = P_t^2 + a^2 P_\theta^2  
            + e^{2P} \big( Q_t^2 + a^2 Q_\theta^2 \big) + 4 \rho \, {k^2 + v^2 \over 1-v^2}, 
\\
{1 \over t} 
\lambda_\theta 
& = 2 P_t P_\theta + 2 e^{2P} Q_t Q_\theta - {4 \rho \over a} \, \frac{(1+k^2)v}{1 - v^2}. 
\endaligned
\ee  

\item It follows also that 
$$
\aligned 
& \Bigg(
2 t P_t P_\theta + 2 t e^{2P} Q_t Q_\theta  
\Bigg)_t 
- \Bigg(
t P_t^2 + t a^2 P_\theta^2  
            + t e^{2P} \big( Q_t^2 + a^2 Q_\theta^2 \big) 
\Bigg)_\theta 
\\
& = 
\Bigg( 4 {\rho t \over a} \, \frac{(1+k^2)v}{1 - v^2} \Bigg)_t 
+ \Bigg( 4 \rho t \, {k^2 + v^2 \over 1-v^2}  \Bigg)_\theta 
= - 4 \rho \, (1-k^2) \, \Sigma_1
\\
& = - 2 \rho \, (1-k^2)  \Bigg( t^2 \, P_t P_\theta + t^2 e^{2P} \, Q_t Q_\theta\Bigg)
=  {a_t \over a} \Bigg( 2t P_t P_\theta + 2t e^{2P} \, Q_t Q_\theta
\Bigg), 
\endaligned
$$
and therefore 
\be
\aligned 
& a
\Bigg(
{2 t \over a} \Big(  P_t P_\theta + e^{2P} Q_t Q_\theta  \Big) 
\Bigg)_t 
- \Bigg(
t \Big( 
P_t^2 + a^2 P_\theta^2  
            + e^{2P} \big( Q_t^2 + a^2 Q_\theta^2 \big) \Big)
\Bigg)_\theta 
 = 0. 
\endaligned
\ee
Remarkably, this equation does not depend explicitly on the velocity of the fluid, but only on its mass density $\rho$ via the function $a$. This equation is also equivalent to the second equation in \eqref{eq:use0}. 

\item In addition, we have a nonlinear wave equation for $\lambda$, which can be derived from \eqref{evolution-II} and is equivalent to the balance law for the energy: 
\bel{eq:use0-r}
\aligned
a \, \Bigg(
 {1 \over a^2} \, \Bigg(
P_t^2+ a^2 P_\theta^2 + e^{2P} \big(Q_t^2 + a^2 Q_\theta^2 \big) 
\Bigg)
\Bigg)_t 
- \big( 2a P_t P_\theta  + 2 a e^{2P} Q_t Q_\theta \big)_\theta 
& =  - {2 \over a t} \,  \Big( P_t^2  + e^{2P} Q_t^2 \Big).
\endaligned
\ee

\eei 
 
For sufficiently smooth solutions, the equation \eqref{eq:use0-r} is a consequence of \eqref{33-eq2}--\eqref{33-weakconstraints-rho}, 
yet it will play a role in controlling the $H^1$ norm of $P,Q$.   
For the second equation in \eqref{33-weakconstraints-rho} to lead to a function $\lambda$ defined on the torus $T^1$, 
the following condition is required: 
\bel{33-integralnulle}
\int_{T^1}  \Bigg( P_t P_\theta + e^{2P} Q_t Q_\theta 
- {2 \rho \over a} \, \frac{(1+k^2)v}{1 - v^2}
\Bigg)(t_0, \theta) \, d\theta = 0. 
\ee  
However, if it holds at some ``initial'' time then it remains true for all times, 
as follows by integrating the second equation in \eqref{eq:use}, i.e.
\be
{d \over dt} \int_{T^1}  t \Bigg( P_t P_\theta + e^{2P} Q_t Q_\theta 
- {2 \rho \over a} \, \frac{(1+k^2)v}{1 - v^2}
\Bigg) \, d\theta =0.
\ee 
This is true for all smooth solutions to \eqref{33-eq2}, as well as for all weak solutions (in the sense below).
Throughout, the condition \eqref{33-integralnulle} is always assumed to hold.


\section{The Gowdy equations} 
\label{sec:3}

\subsection{The notion of \HoneWoneone\ weak solutions} 
\label{sec:H1W1}

For clarity in the presentation, we study first the case where $a$ is a constant which we normalize to be $a \equiv 1$.  
To define a concept of weak solutions, 
we investigate the following \textsl{Gowdy equations} with unknowns $(P,Q,\lambda)$ 
\bel{33-eq2-FULL}
\aligned 
P_{tt} - P_{\theta\theta} + {1 \over t} \, P_t  
& =  e^{2P} (Q_t^2 - Q_\theta^2), 
\\ 
Q_{tt} - Q_{\theta\theta} + {1 \over t} \, Q_t  
& = - 2 (P_t Q_t - P_\theta Q_\theta), 
\endaligned
\ee
and 
\bel{33-eq2-FULL-2}
\aligned
\lambda_{tt} - \lambda_{\theta\theta}
& = - P_t^2 + P_\theta^2  
           + e^{2P} \big( - Q_t^2 + Q_\theta^2 \big),
\endaligned
\ee
coupled to the two constraint equations  
\bel{33-weakconstraints}
\aligned
{1 \over t} \lambda_t 
& = P_t^2 + P_\theta^2  
            + e^{2P} \big( Q_t^2 + Q_\theta^2 \big), 
\\
{1 \over t} 
\lambda_\theta 
& = 2 P_t P_\theta + 2 e^{2P} Q_t Q_\theta.  
\endaligned
\ee 
The evolution of the coefficient $\lambda$ is thus determined by solving the wave equations \eqref{33-eq2-FULL-2}, but $\lambda$ can also be recovered from the constraint equations \eqref{33-weakconstraints} (cf.~Remark~\ref{rem-33} below). We refer to the two equations \eqref{33-eq2-FULL} as the \textsl{essential Gowdy equations.} 

It is worth observing that the above system provides a solution to the Euler-Gowdy system when the mass density is chosen to vanish identically, while the velocity can be arbitrary.  In particular, in view of \eqref{constraint-II-PQ}, the function $a$ then is constant in time and by introducing a change of variable in the variable $\theta$ (only) one can reduce the problem to a constant function $a$. 

We use a standard notation for Lebesgue and Sobolev spaces, such as $H^1$ (functions with square-integrable derivatives), $W^{1,1}$ (functions with integrable derivatives), etc. First of all, we need to introduce a notion of initial data set.  The Gowdy equations can solved in both forward and backward time directions, so without loss of generality we solve from a positive initial time. 

\begin{definition} 
\label{def-weak-Gowdy-data}
Consider the Gowdy equations \eqref{33-eq2-FULL} and \eqref{33-eq2-FULL-2} and fix some  time $t_0> 0$.
A set of functions $(\barP, \barP_0, \barQ, \barQ_0)$ defined on $T^1$ 
is called a \emph{$H^1$-$W^{1,1}$ essential Gowdy initial data set} 
if   
\bel{33-intialregu}  
\barP, \barQ \in H^1(T^1),
\qquad 
\barP_0, \barQ_0 \in L^2(T^1). 
\ee 
Furthermore, a set of functions $(\barP, \barQ,  \barlambda, \barP_0, \barQ_0, \barlambda_0)$ defined on $T^1$ 
is called a \emph{$H^1$-$W^{1,1}$ Gowdy initial data set} if \eqref{33-intialregu}  holds together with
\bel{33-weakconstraints-calcul}
\aligned
{1 \over t_0} \barlambda_0
& = \barP_0^2 + \barP_{\theta}^2  
            + e^{2 \barP} \big( \barQ_0^2 + \barQ_{\theta}^2 \big), 
\\
{1 \over t_0} \barlambda_{\theta}
& = 2 \barP_0 \barP_{\theta} + 2 e^{2\barP} \barQ_0 \barQ_{\theta},
\endaligned
\ee
which implies $\barlambda_0 \in L^1(T^1)$ and $\barlambda \in W^{1,1}(T^1)$. 
\end{definition} 

By Sobolev's embedding theorem, all of our initial data and solutions are in $L_\loc^\infty$, so that the terms $e^{2 \barP}$ and $e^{2P}$ are locally bounded functions (and in fact H\"older continuous functions). 
We then introduce a notion of solution at the same level of weak regularity.

\begin{definition} 
\label{def-weak-Gowdy-solution}
Consider the Gowdy equations \eqref{33-eq2-FULL}. 
Given any interval of time $I \subset (0, + \infty)$, a pair of functions $(P,Q)$ defined on $I \times T^1$ and satisfying 
\bel{eq-weakPQ} 
\aligned
&  P,Q \in L_\loc^\infty(I, H^1(T^1)), 
\qquad 
P_t, Q_t \in L_\loc^\infty(I, L^2(T^1)), 
\endaligned
\ee
 is called a $H^1$ \emph{weak solution to the essential Gowdy equations} if the equations
 \eqref{33-eq2-FULL} hold in the sense of distributions\footnote{The left-hand sides of \eqref{33-eq2-FULL} are distributions in
$H^{-1}$, 
while the right-hand sides of \eqref{33-eq2-FULL} are integrable functions on $T^1$. (Recall that $P, Q$ are known to be bounded.)}. 
A triple of functions $(P,Q, \lambda)$ defined on $I \times T^1$ and satisfying \eqref{eq-weakPQ} and 
\bel{eq-weaklambda}
\aligned 
& \lambda \in L_\loc^\infty(I, W^{1,1}(T^1)), 
\qquad 
\lambda_t \in L_\loc^\infty(I, L^1(T^1)), 
\endaligned
\ee
is called a $H^1$-$W^{1,1}$ \emph{weak solution to the Gowdy equations}  
 if the evolution equations \eqref{33-eq2-FULL}-\eqref{33-eq2-FULL-2} hold in the sense of distributions while the constraint equations \eqref{33-weakconstraints} hold as equalities between $L^1$ functions. 
\end{definition}

Finally, we state a notion of solution to the initial value problem, when an initial condition is prescribed: 
\be
\aligned
& P(t_0,\cdot) = \barP, \quad && Q(t_0,\cdot) = \barQ, \quad && \lambda(t_0,\cdot) = \barlambda,
\\
& P_t(t_0,\cdot) = \barP_0, \quad && Q_t(t_0,\cdot) = \barQ_0,  \quad  && \lambda_t(t_0,\cdot) = \barlambda_0. 
\endaligned
\ee

\begin{definition} 
\label{def:initialdata}
Under conditions in Definitions \ref{def-weak-Gowdy-data} and \ref{def-weak-Gowdy-solution}, a weak solution $(P,Q)$ to the essential Gowdy equations (defined for $t \in I=[t_0, t_1)$)
is said to assume the prescribed initial data set $(\barP, \barQ, \barP_0, \barQ_0)$ at the time $t_0$ if\footnote{
From now on, $P(t_0)$ stands for $P(t_0,\cdot)$, etc.}
\bel{eq:3417}
\aligned
& \lim_{t \to t_0 \atop t >t_0} {1 \over t-t_0} \int_{t_0}^t 
\Big(
\| P(s) - \barP \|^2_{H^1(T^1)} + \| Q(s) - \barQ \|^2_{H^1(T^1)} \Big) \, ds = 0, 
\\
& \lim_{t \to t_0 \atop t >t_0} {1 \over t-t_0} \int_{t_0}^t 
\Big(
\| P_t(s) - \barP_0 \|^2_{L^2(T^1)} + \| Q_t(s) - \barQ_0 \|^2_{L^2(T^1)} \Big) \, ds = 0.  
\endaligned
\ee
Furthermore, a weak solution $(P,Q,\lambda)$ to the Gowdy equations (defined for $t \in I=[t_0, t_1)$)
is said to assume the prescribed initial data set  $(\barP, \barQ, \barP_0, \barQ_0, \barlambda, \barlambda_0)$ 
if \eqref{eq:3417} holds together with 
\be
\aligned
& \lim_{t \to t_0 \atop t >t_0} {1 \over t-t_0} \int_{t_0}^t 
\Big(
\| \lambda(s) - \barlambda \|_{W^{1,1}(T^1)} + \| \lambda_t(s) - \barlambda_0 \|_{L^1(T^1)}
\Big) \, ds = 0.
\endaligned
\ee
\end{definition}

We emphasize that Definition~\ref{def-weak-Gowdy-solution} only requires that the solution have $L^\infty$ regularity in time, which does not allow to define the trace at $t=t_0$ in a pointwise sense and this is why we have introduced an average in time in Definition~\ref{def:initialdata}. However, by using the wave equations\footnote{The desired regularity follows from the equation satisfied by the energy (cf.~Section~\ref{sec:Gowdy-Volu}).}
 satisfied by $P,Q,\lambda$, we can deduce the additional regularity 
\bel{eq:3130}
\aligned
& P,Q \in C^0(I, H^1(T^1)), \qquad 
P_t, Q_t \in C^0(I, L^2(T^1)),  \qquad 
\\
& \lambda \in C^0(I, W^{1,1}(T^1)),  \qquad 
\lambda_t \in C^0(I, L^1(T^1)), 
\endaligned
\ee
so that the integrands in \eqref{eq:3417} are in fact continuous in $t$,
and the initial conditions hold in the stronger sense: 
\bel{eq:66604}
\aligned
& \lim_{t \to t_0 \atop t >t_0} 
\Big( \| P(t) - \barP \|^2_{H^1(T^1)} + \| Q(t) - \barQ \|^2_{H^1(T^1)} \Big) = 0, 
\\
& \lim_{t \to t_0 \atop t >t_0}  
\Big( \| P_t(t) - \barP_0 \|^2_{L^2(T^1)} + \| Q_t(t) - \barQ_0 \|^2_{L^2(T^1)} \Big) = 0.  
\endaligned
\ee
From \eqref{eq:66604} it is then immediate that 
\bel{eq:31333}
\aligned
& \lim_{t \to t_0 \atop t >t_0}  
\Big(
\| \lambda(t) - \barlambda \|_{W^{1,1}(T^1)} + \| \lambda_t(t) - \barlambda_0 \|_{L^1(T^1)}
\Big) = 0.
\endaligned
\ee 

\begin{remark}
\label{rem-33} 
1. The same function $\lambda$ is obtained by using either the evolution equation for $\lambda$ or
 the constraints \eqref{33-weakconstraints}. Up to an integration constant, we have 
\be
\aligned
\lambda(t,\theta) = {}
  & \int_{\theta_0}^\theta 2 t_0 \, \big( P_t P_\theta + e^{2P} Q_t Q_\theta\big)(t_0,\theta') \, d\theta'
\\& 
+ \int_{t_0}^t t' \ \big(P_t^2 + P_\theta^2 + e^{2P} \big( Q_t^2 + Q_\theta^2 \big)\big)(t',\theta)\, dt', 
\endaligned
\ee
which, in agreement with Definition~\ref{def-weak-Gowdy-solution}, gives us a function $\lambda \in L_\loc^\infty(I, W^{1,1}(T^1))$.
The existence of the time integral will be established in Step 2 of the proof of Theorem~\ref{33-theoG}.

2. Everywhere in this text, $L^2$, $H^1$, \ldots regularity is always understood as  $L^2$, $H^1$, \ldots regularity with respect to the \textsl{space variable,} while the time regularity is easily inferred from this regularity and often will not be explicitly indicated.
\end{remark}


\subsection{First-order formulation of the essential Gowdy equations}
\label{sec:Gowdy-conserv}

By definition, the essential Gowdy equations are the two equations \eqref{33-eq2-FULL}.
Sufficiently smooth solutions of this subsystem determine the full dynamics of the solutions, so it is natural to investigate this subsystem first. For any solution, the following identities hold
\bel{key-observe}
\big( t f(P,Q)P_t + t g(P,Q)Q_t \big)_t
- \big( t f(P,Q)P_\theta + t g(P,Q)Q_\theta \big)_\theta = 0
\ee
in which the possible choices of the functions $f, g$ are parametrized by a quadratic function $F(Q)$ and are
\be
f(P,Q) \coloneqq F'(Q) , \qquad
g(P,Q) \coloneqq \frac{1}{2} F''(Q) - F(Q) e^{2P} .
\ee
In particular, for $F(Q)=Q$ and $F(Q)=-1$ we can write
\bel{eq:33-8} 
\aligned
\Big( t \big( P_t - e^{2P} Q Q_t \big) \Big)_t - \Big( t \big( P_\theta - e^{2P} Q Q_\theta \big) \Big) _\theta = 0, 
\\
\big( t e^{2P}  Q_t \big)_t - \big( t e^{2P} Q_\theta \big)_\theta  = 0. 
\endaligned
\ee
These equations are equivalent to \eqref{33-eq2-FULL}.
We now rewrite the equations in terms of $Q$ and $R\coloneqq e^{-2P}+Q^2$.

\begin{definition} 
\label{def-conser-Gowdy}
The \emph{first-order formulation} of the Gowdy equations, by definition,  
consists of the following two second-order conservation laws in $(R,Q)$ 
\bel{eq:33-8-deux} 
\aligned
\big( t \, \Omega(R,Q)^{-2}   R_t \big)_t - \big( t \, \Omega(R,Q)^{-2}   R_\theta \big) _\theta = 0, 
\\
\big( t \, \Omega(R,Q)^{-2} Q_t \big)_t - \big( t \, \Omega(R,Q)^{-2} Q_\theta \big)_\theta  = 0, 
\endaligned
\ee  
with a coefficient
\be
\Omega^2 \coloneqq e^{-2P} = R - Q^2.  
\ee
\end{definition}

We set  
\be
\aligned 
& \Phi \coloneqq (R_0, R_1, Q_0, Q_1) = (R_t, R_\theta, Q_t, Q_\theta), 
\endaligned
\ee
which we refer to as the \textsl{first-order variables}.
The ``non-local'' coefficient $\Omega$ is determined by integration of $\Phi$ from some initial time $t_0$
(with $\theta \in (0,1)$): 
\bel{eq:33-9b} 
\aligned
R(t, \theta) & = \int_{t_0}^t R_0(t', 0) \, dt' + \int_0^\theta R_1(t, \theta') \, d\theta',
\\
Q(t, \theta)  & = \int_{t_0}^t Q_0(t', 0) \, dt' + \int_0^\theta Q_1(t, \theta') \, d\theta',
\endaligned
\ee
yielding us an expression for $\Omega^2 = R-Q^2$. 
We have arrived at the following.

\begin{corollary} The essential Gowdy equations take the form of 
a first-order hyperbolic system of four conservation laws 
\bel{eq:33-9} 
\aligned
\Big( t \, \Omega^{-2} R_0 \Big)_t - \Big(  t \, \Omega^{-2} R_1 \Big)_\theta = 0,
\\
\big( R_1 \big)_t - \big(  R_0 \big)_\theta  = 0, 
\\
\Big(  t \, \Omega^{-2} Q_0 \Big)_t - \Big(  t \, \Omega^{-2} Q_1 \Big)_\theta = 0,
\\
\big( Q_1 \big)_t - \big( Q_0 \big)_\theta  = 0, 
\\
\endaligned
\ee 
in which the ``non-local'' coefficient $\Omega$ is given by the integral formulas \eqref{eq:33-9b}. 
\end{corollary}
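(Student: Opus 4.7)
The plan is to verify that the system \eqref{eq:33-9} is just \eqref{eq:33-8-deux} rewritten after introducing the first-order variables $\Phi=(R_0,R_1,Q_0,Q_1)\coloneqq(R_t,R_\theta,Q_t,Q_\theta)$. First I would substitute these aliases into the first equation of \eqref{eq:33-8-deux}, obtaining verbatim the first conservation law of \eqref{eq:33-9}, and likewise for the $Q$-equation to recover the third. The second and fourth equations of \eqref{eq:33-9}, namely $(R_1)_t=(R_0)_\theta$ and $(Q_1)_t=(Q_0)_\theta$, encode nothing more than the commutation of mixed partial derivatives $R_{t\theta}=R_{\theta t}$ and $Q_{t\theta}=Q_{\theta t}$, which hold in the sense of distributions by the regularity \eqref{eq-weakPQ}.

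To complete the equivalence in the other direction, I would check that, given $\Phi$ satisfying \eqref{eq:33-9} on $I\times T^1$ together with compatible initial data along $\{t=t_0\}$ and $\{\theta=0\}$, the path integrals \eqref{eq:33-9b} unambiguously define $R(t,\theta)$ and $Q(t,\theta)$. The compatibility equations $(R_1)_t=(R_0)_\theta$ and $(Q_1)_t=(Q_0)_\theta$ are precisely the closed-form conditions that guarantee path-independence of the relevant line integrals, so this step is immediate. Once $R$ and $Q$ are reconstructed, the coefficient $\Omega^2=R-Q^2$ is determined pointwise (under the physical requirement $R>Q^2$), and the first and third equations of \eqref{eq:33-9} are equivalent to the second-order conservation laws of \eqref{eq:33-8-deux}.

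The only real subtlety, rather than an obstacle, is that the resulting system is first-order hyperbolic only in its principal part: the characteristic speeds $\pm 1$ are inherited directly from the wave structure of \eqref{33-eq2-FULL}, but the closure through \eqref{eq:33-9b} makes $\Omega$ a \emph{nonlocal} functional of $\Phi$. This feature is worth flagging because it will introduce memory terms in subsequent analysis, but it does not affect the formal equivalence asserted in the corollary, which is essentially a bookkeeping reformulation.
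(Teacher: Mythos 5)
Your argument is correct and coincides with the paper's (essentially tautological) justification: the first and third equations are the conservation laws \eqref{eq:33-8-deux} rewritten in the aliases $\Phi=(R_t,R_\theta,Q_t,Q_\theta)$, the second and fourth are the Schwarz compatibility relations needed to close the system, and $\Omega^2=R-Q^2$ is recovered nonlocally via \eqref{eq:33-9b}. The only point the paper adds (as a remark, not a gap in your reasoning) is that on the torus the reconstruction of $R,Q$ also requires the zero-mean conditions $\int_{T^1}R_1\,d\theta=\int_{T^1}Q_1\,d\theta=0$, which are propagated in time.
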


Our first-order formulation determines the full evolution of the Gowdy solutions. 
In contrast to it, all previous studies in the literature on Gowdy symmetry were based on the non-conservative equations \eqref{33-eq2-FULL}. 
Some remarks are in order: 
\bei 
\item The second and fourth equations in \eqref{eq:33-9} are trivial compatibility relations, but are necessary to ``close'' the system. For consistency, it is also required that 
\be
\int_{T^1}  R_1(t, \theta') \, d\theta' = \int_{T^1}  Q_1(t, \theta') \, d\theta' = 0, 
\ee
which do hold for all times provided they hold at some initial time $t_0$. 

\item In agreement with Definition~\ref{def-weak-Gowdy-solution}, we consider weak solutions to \eqref{eq:33-9}--\eqref{eq:33-9b}, satisfying the $L^2$ regularity conditions 
\be
\Phi \in L^\infty(I, L^2(T^1)). 
\ee
The coefficient $\Omega \in L^\infty(I, H^1(T^1))$ is then more regular, and we may expect to handle \eqref{eq:33-9} by techniques of analysis for linear hyperbolic systems with non-constant coefficient, although the coefficient $\Omega^2$ is actually determined from an integral expression in the main unknowns. A further reduction will be presented later in Section~\ref{sec:6}.
\eei 

\begin{remark}  
The choice of conservative variables is not unique as is clear from \eqref{key-observe}.
 If the Gowdy equations are interpreted as a wave map system posed on the hyperbolic space, the conservation laws can also be related to the symmetries of the hyperbolic space via Noether's theorem.
The compatibility relations are also only one choice among the general identities (valid for arbitrary  $G=G(P,Q)$): 
\be
\aligned
& \big( G_P'(P,Q) P_\theta + G_Q'(P,Q) Q_\theta \big)_t 
- \big( G_P'(P,Q) P_\theta + G_Q'(P,Q) Q_\theta \big)_t 
\\
& = G(P,Q)_{\theta t} - G(P,Q)_{t\theta}
= 0.
\endaligned
\ee
\end{remark}


\subsection{The quadratic formulation of the Gowdy equations}
\label{sec:Gowdy-quadra}

Before we proceed further, we introduce an alternative formulation of the Gowdy equations, which also puts some light on their structure. Unlike our first-order formulation, this second formulation is not based on conservation laws but has the advantage of involving quadratic nonlinearities only. 
We introduce the new dependent variables
\be
(P_0, P_1, S_0, S_1) \coloneqq  (P_t, P_\theta, e^P Q_t, e^P Q_\theta),  
\ee
and we obtain the following structure of the essential equations
\bel{eq-quadra1}
\aligned
P_{0t} - P_{1\theta} + {P_0 \over t} & = (S_0)^2 - (S_1)^2, 
\\
P_{1t} - P_{0\theta} & = 0, 
\\
S_{0t} - S_{1\theta} + {S_0 \over t}& = - P_0 S_0 + P_1 S_1, 
\\
S_{1t} - S_{0\theta} & =  P_0 S_1 - P_1 S_0.  
\endaligned
\ee
On the other hand, for the equations satisfied by $\lambda$, we find 
\bel{eq-quadra2}
\aligned
\lambda_{0t} - \lambda_{1\theta} 
& =  - (P_0)^2 + (P_1)^2 - (S_0)^2 + (S_1)^2,
\\
\lambda_{1t} - \lambda_{0\theta} & = 0, 
\\
{\lambda_0 \over t} 
& =  (P_0)^2 + (P_1)^2 + (S_0)^2 + (S_1)^2, 
\\ 
{\lambda_1 \over t} 
& =   2 (P_0 P_1 + S_0 S_1), 
\endaligned
\ee
after setting
\bel{eq:lambdalambda} 
\lambda_0 = \lambda_t, \qquad \lambda_1 \coloneqq \lambda_\theta. 
\ee  

\begin{definition} 
\label{def-quadr}
The set of equations \eqref{eq-quadra1} and \eqref{eq-quadra2} with unknown $\Psi\coloneqq(P_0, P_1, S_0, S_1, \lambda_0, \lambda_1)$ is referred to as the \emph{quadratic formulation of Gowdy equations,} 
and $\Psi$ is referred to\footnote{With some abuse of terminology, we also refer to $(P_0, P_1, S_0, S_1)$ as the quadratic variable when we only consider the essential Gowdy equations.}
as the \emph{quadratic variable.} 
\end{definition} 

Clearly, seeking for $H^1$-$W^{1,1}$ weak solutions $P, Q, \lambda $ in the sense of Definition~\ref{def-weak-Gowdy-solution} is equivalent to seeking for $L^2$-$L^1$ \textsl{weak solutions} 
to \eqref{eq-quadra1} and \eqref{eq-quadra2} such that 
\be
P_0, P_1, S_0, S_1 \in L^\infty(I, L^2(T^1)), 
\qquad 
\lambda_0, \lambda_1 \in L^\infty(I, L^1(T^1)).
\ee 
Our formulation of the Gowdy equations involves only quadratic expressions and we are able here to eliminate the factor $e^{2P}$ that appears in the original formulation. In the following section, we will also give an energy functional that is fully quadratic in $\Psi$.


\subsection{A weighted energy functional for Gowdy solutions}
\label{sec:Gowdy-Volu}

\paragraph*{The energy functional.} 

 The weak regularity conditions we imposed are natural in order to give a meaning to the Einstein equations, but the same functional spaces arise with a
natural ``energy'' functional associated with this system. 
If $P, Q$ is an arbitrary weak solution to the essential Gowdy equations \eqref{eq:33-8}, then  it also satisfies the energy
identity
\bel{33-bal}
E(\Psi)_t + F(\Psi)_\theta = - \frac{2}{t} G(\Psi) \leq 0, 
\ee
in which the energy density, energy flux, and the energy dissipation are defined, respectively, by 
\be
\aligned
E(\Psi) 
&\coloneqq P_t^2 + P_\theta^2 + e^{2P} \big( Q_t^2 + Q_\theta^2  \big) = P_0^2 + P_1^2 + S_0^2 + S_1^2 \geq 0, 
\\
F(\Psi) &\coloneqq - 2 \big( P_t P_\theta + e^{2P} Q_t Q_\theta \big) = -2 (P_0P_1 + S_0S_1) \leq E(\Psi), 
\\
G(\Psi) &\coloneqq P_0^2 + S_0^2 \leq E(\Psi).  
\endaligned
\ee
It is convenient also to define 
\be
H(\Psi) \coloneqq P_1^2 + S_1^2  \leq E(\Psi)
\ee
and express the energy identity as 
\bel{33-bal-factor-t2}
\big(t^2 E(\Psi) \big)_t + \big(t^2 F(\Psi) \big)_\theta = 2t \, H(\Psi) \geq 0. 
\ee
The following observation is immediate, but provides us with an essential property of the Gowdy equations.

\begin{proposition}[Convexity of the energy of the Gowdy equations] 
The energy density 
$\Ecal(\Psi)$
is a (uniformly) convex function of the quadratic variable $\Psi= (P_0, P_1, S_0, S_1)$. 
\end{proposition}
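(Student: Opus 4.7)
The proposition is essentially a reformulation of the very definition of $E(\Psi)$ given just above, so the plan will be short but I want to make clear why this is a substantive observation. The plan is to simply observe that in the quadratic variable $\Psi=(P_0,P_1,S_0,S_1)$ the energy density is literally the squared Euclidean norm,
\[
E(\Psi) = P_0^2 + P_1^2 + S_0^2 + S_1^2 = |\Psi|^2,
\]
and then to read off uniform convexity from the Hessian.

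First I would make the identification explicit, referring back to the substitution $(P_0,P_1,S_0,S_1) = (P_t, P_\theta, e^P Q_t, e^P Q_\theta)$ introduced in Section~\ref{sec:Gowdy-quadra}. The key point to stress is that the factor $e^{2P}$ multiplying the $Q$-derivatives in the original expression $P_t^2 + P_\theta^2 + e^{2P}(Q_t^2 + Q_\theta^2)$ has been absorbed into the new variables $S_0, S_1$, so that no nonlinear coefficient depending on the state $P$ remains in the energy density. This absorption is precisely what buys uniform convexity rather than merely pointwise convexity.

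Next I would compute the Hessian of $E$ with respect to $\Psi\in\mathbb{R}^4$: one obtains $\nabla^2 E = 2\,I_4$, the $4\times 4$ identity matrix scaled by $2$. Since this Hessian is independent of $\Psi$ and its smallest eigenvalue equals $2$, we conclude that
\[
\xi^\top \nabla^2 E(\Psi)\, \xi \;=\; 2|\xi|^2 \qquad \text{for all } \xi\in\mathbb{R}^4,
\]
uniformly in $\Psi$. This is exactly the statement of uniform convexity with modulus $2$.

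There is no real obstacle here: the proposition is a one-line computation once one has chosen the quadratic variables. The content of the statement lies not in the proof itself but in the prior reformulation, which makes the energy a fixed positive-definite quadratic form on $\mathbb{R}^4$; I would close by noting (as an aside that motivates the subsequent analysis) that this uniform coercivity on the fibre will be crucial when passing to the limit in weakly convergent sequences of solutions, since by a standard convexity lemma the energy becomes weakly lower semicontinuous in $\Psi$.
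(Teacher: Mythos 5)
Your proposal is correct and coincides with the paper's own (deliberately one-line) argument: in the quadratic variable the energy density is exactly $P_0^2+P_1^2+S_0^2+S_1^2$, the squared Euclidean norm of $\Psi$, whose Hessian is twice the identity, so uniform convexity is immediate. The paper likewise treats this as an ``immediate observation'' whose substance lies in the prior choice of quadratic variables absorbing the factor $e^{2P}$, exactly as you note.
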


The \textsl{Gowdy energy functional} at time $t \in I$, defined as
\be
\Ecal(t) \coloneqq \int_{T^1} E(\Psi)(t, \theta) \, d\theta = \| \Psi(t, \cdot)\|_{L^2(T^1)}^2,
\ee
is nothing but the $L^2$ norm of the quadratic variable $\Psi$. It also is equivalent to the $L^2$ norm of the first-order variable $\Phi$. We have 
\bel{eq-237}
\| \Phi(t, \cdot) \|_{L^2(T^1)} 
\lesssim 
\| \Psi(t, \cdot)\|_{L^2(T^1)}
\lesssim  
\| \Phi(t, \cdot) \|_{L^2(T^1)} 
\ee
with constants that depend only on the sup-norm of $(P,Q)$,
itself controlled by its norm in $H^1(T^1)\subset L^\infty(T^1)$. More precisely, we have the following: 
\bei 
\item In the \textsl{forward} time direction, the total energy is controlled by integrating  \eqref{33-bal} which yields the folowing  $L^2$ estimate: 
\bel{eq-238}
\aligned
& \| \Psi(t_2, \cdot) \|_{L^2(T^1)} \leq  \| \Psi(t_1, \cdot) \|_{L^2(T^1)}, \qquad t_1 \leq t_2.
\endaligned
\ee
\item In order to advance in the \textsl{backward} time direction, we must use the weighted version \eqref{33-bal-factor-t2}
and we obtain  
\bel{eq-239}
\aligned
& \| \Psi(t_1, \cdot) \|_{L^2(T^1)} \leq (t_2/t_1)^2  \| \Psi(t_2, \cdot) \|_{L^2(T^1)},  \qquad t_1 \leq t_2.
\endaligned
\ee 
\eei 
\noindent Hence, the $L^2$ norm of both the quadratic and the first-order variables of any solution to the Gowdy equations is bounded for compact time interval, once it is assumed to be bounded on any initial hypersurface of constant time.  The statement in the quadratic variables, of course, is particularly simple and our quadratic variables are particularly convenient in order to address global issues on the behavior of weak solutions. 

\paragraph*{The weighted energy functional.} We now observe that a variant of the basic energy functional can be introduced which have even stronger dissipation properties and therefore provide us with an additional integrability property for the solutions. 
We start by considering the functional
\be
\Vcal(t) \coloneqq \int_{(T^1)^3} \sqrt{ e^{\lambda/2} t^{3/2}} \, dx^2 dx^3 d\theta
= t^{3/4} \int_{T^1} e^{\lambda/4} \, d\theta. 
\ee 
In view of \eqref{33-weakconstraints}, we have the identity 
\be
{d \over dt} \Big( t^{-3/4} \Vcal(t) \Big) = {t \over 4} \int_{T^1} e^{\lambda/4} \, E(\Psi)  \, d\theta, 
\ee
which relates together the two functionals 
\be
\Vcalt(t) \coloneqq t^{-3/4} \Vcal(t), \qquad 
\Ecalt(t) \coloneqq  \int_{T^1} e^{\lambda/4} \,  E(\Psi)  \,  d\theta. 
\ee

\paragraph*{The integrability property.} 
Next, in order to compute the second-order derivative of the functional, from \eqref{33-bal} we deduce the weighted energy identity
$$
\aligned
\big( e^{\lambda/4} E(\Psi) \big)_t + \big( e^{\lambda/4} F(\Psi) \big)_\theta 
& = - \frac{2}{t} e^{\lambda/4} G(\Psi)
+ {1 \over 4} e^{\lambda/4} \Big( \lambda_t E(\Psi) + \lambda_\theta F(\Psi) \Big)
\endaligned
$$
so, in other words,  
\bel{eq-9938}
\aligned
& \big( e^{\lambda/4} E(\Psi) \big)_t + \big( e^{\lambda/4} F(\Psi) \big)_\theta 
  = - \frac{2}{t} e^{\lambda/4} G(\Psi) + {t \over 4} e^{\lambda/4}  N(\Psi)^2,
\endaligned
\ee 
in which we observe that the term $N(\Psi)^2$ (defined now) has indeed a \textsl{non-negative sign}, since 
\bel{eq:nullformnotation}
\aligned
N(\Psi)^2 
\coloneqq & \, E(\Psi)^2 - F(\Psi)^2 
\\
= \, 
& \big( P_t^2 - P_\theta^2 \big)^2+ e^{2P} \big( Q_t^2 - Q_\theta^2 \big)^2
+ \big( P_t - P_\theta \big)^2 e^{2P} \big( Q_t + Q_\theta \big)^2 
+ \big( P_t + P_\theta \big)^2 e^{2P} \big( Q_t - Q_\theta \big)^2
\\
= \, 
& \big( P_0^2 - P_1^2 \big)^2+ \big( S_0^2 - S_1^2 \big)^2
+ \big( P_0 - P_1 \big)^2 \big( S_0 + S_1 \big)^2 
+ \big( P_0+ P_1 \big)^2 \big( S_0 - S_1 \big)^2 \geq 0. 
\endaligned
\ee
Hence, we find that the evolution of the rescaled energy is given by 
\be
\aligned
{d \over dt} \Ecalt(t) & = - {2 \over t} \Gcal(t) + {t \over 4} \Ncal(t),
\\
\Gcal(t) & \coloneqq  \int_{T^1} G(\Psi) \,  e^{\lambda/4} \,  d\theta, 
\qquad  
\Ncal(t) \coloneqq  \int_{T^1} N(\Psi)^2 \,  e^{\lambda/4} \,  d\theta, 
\endaligned
\ee
which leads us to 
\be
\Ecalt(t) =  \Ecalt(t_0) - \int_{t_0}^t {2 \over s} \Gcal(s) ds + \int_{t_0}^t {s \over 4} \Ncal(s) ds,
\ee
in which \textsl{all but the term} involving $\Ncal$ are already controlled from the fundamental energy functional; cf.~\eqref{eq-238}--\eqref{eq-239}.  We have thus reached the following conclusion.

\begin{theorem}[Integrability property of Gowdy solutions] 
\label{theo:higherinte}
Weak solutions to the Gowdy equations 
 satisfies the additional \emph{space-time} integrability property
\bel{eq-high-int}
P_t^2 - P_\theta^2, 
\quad 
Q_t^2 - Q_\theta^2, 
\quad
\big( P_t - P_\theta \big) \big( Q_t + Q_\theta \big),
\quad
\big( P_t + P_\theta \big) \big( Q_t - Q_\theta \big)
 \in L^2_\loc (I \times T^1), 
\ee
which also implies 
\be
P_t Q_t - P_\theta Q_\theta,
\quad 
P_t Q_\theta - P_\theta Q_t
 \in L^2_\loc (I \times T^1). 
\ee
\end{theorem}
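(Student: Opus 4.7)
The plan is to integrate the already-derived identity
\begin{equation}
\int_{t_0}^t \frac{s}{4}\,\Ncal(s)\,ds = \Ecalt(t) - \Ecalt(t_0) + \int_{t_0}^t \frac{2}{s}\,\Gcal(s)\,ds
\end{equation}
and observe that on every compact subinterval $[t_0,t_1] \subset I$ the right-hand side is a priori controlled by the basic energy bounds \eqref{eq-238}--\eqref{eq-239} established for $\Psi$. The point is that the sign of $N(\Psi)^2$ in the weighted balance \eqref{eq-9938} then upgrades into a quantitative space-time $L^1$ estimate; the dissipative term $\Gcal$ has a favorable sign as well, which is convenient but not essential here.

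The first step is to bound the weight $e^{\lambda/4}$ uniformly above and below by positive constants on $[t_0,t_1]\times T^1$. By \eqref{eq:3130} one has $\lambda \in C^0(I, W^{1,1}(T^1))$, and the one-dimensional Sobolev embedding $W^{1,1}(T^1) \hookrightarrow C^0(T^1)$ shows that $\lambda$ is in fact continuous on $[t_0,t_1]\times T^1$ and hence bounded there; thus $c \leq e^{\lambda/4} \leq C$ for some constants $0 < c \leq C$. Combined with the $L^2$ bound on $\Psi$, this gives a uniform bound on both $\Ecalt(t)$ and $\Gcal(t)$, so that the right-hand side of the displayed identity is finite on $[t_0,t_1]$. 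The lower bound $e^{\lambda/4} \geq c$ then yields $N(\Psi)^2 \in L^1_\loc(I \times T^1)$.

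The conclusion follows by reading off the four nonnegative squares composing $N(\Psi)^2$ in \eqref{eq:nullformnotation}: using additionally that $e^P$ is continuous and strictly positive on $[t_0,t_1]\times T^1$ (since $P \in C^0(I, H^1(T^1))$ and $H^1(T^1)\hookrightarrow C^0(T^1)$), each of the four null forms listed in \eqref{eq-high-int} is in $L^2_\loc$. The two additional bilinear combinations follow from the purely algebraic identities
\begin{equation}
\aligned
2\,(P_t Q_t - P_\theta Q_\theta) &= (P_t+P_\theta)(Q_t-Q_\theta) + (P_t-P_\theta)(Q_t+Q_\theta),\\
2\,(P_t Q_\theta - P_\theta Q_t) &= (P_t-P_\theta)(Q_t+Q_\theta) - (P_t+P_\theta)(Q_t-Q_\theta),
\endaligned
\end{equation}
since each summand on the right-hand side has just been shown to lie in $L^2_\loc(I \times T^1)$.

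The main technical obstacle I anticipate is making the multiplication of the distributional balance \eqref{33-bal} by the merely $W^{1,1}$ (albeit continuous) weight $e^{\lambda/4}$ rigorous, i.e. justifying \eqref{eq-9938} and its space-time integration at the weak regularity of Definition~\ref{def-weak-Gowdy-solution}. I would handle this by mollifying $(P,Q,\lambda)$ in time, deriving the analogue of \eqref{eq-9938} for the smoothed fields (where it is a pointwise identity), and passing to the limit using the $C^0$-in-time regularity recorded in \eqref{eq:3130} together with Fatou's lemma to preserve the sign of $N(\Psi)^2$ in the limit.
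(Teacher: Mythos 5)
Your proposal is correct and follows essentially the same route as the paper: integrating the weighted energy identity \eqref{eq-9938}, controlling every term except the one involving $\Ncal$ by the basic energy bounds \eqref{eq-238}--\eqref{eq-239} (with $e^{\lambda/4}$ and $e^{P}$ bounded above and below via the Sobolev embedding), and reading the four null forms off the nonnegative squares in \eqref{eq:nullformnotation}, then obtaining the last two bilinear combinations algebraically. Your added remarks on the uniform bounds for the weight and on regularizing to justify \eqref{eq-9938} at the $H^1$-$W^{1,1}$ level simply make explicit details the paper leaves implicit.
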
 
 

\paragraph*{Yet another functional.}

The weighted measure $e^{\lambda/4} \,  d\theta$ we use here is natural, since it arises geometrically and allows us to interpret our functionals as geometric objects.
However, any strictly convex weight $\lambda\mapsto\omega(\lambda)\geq 0$ with bounded first-order derivative could also be used instead of~$e^{\lambda/4}$.
We introduce
\be
\Vcalt_{\omega}(t) \coloneqq \int_{T^1} \omega(\lambda) \, d\theta \geq 0. 
\ee
We find
\be
{d \over dt} \Vcalt_\omega(t) = t \int_{T^1} \omega'(\lambda) \, E(\Psi)  \, d\theta \eqqcolon  t  \Ecalt_\omega(t),
\ee 
while \eqref{eq-9938} generalizes to any $\omega$ and any real $\alpha$:
\bel{eq-9938-ab}
\aligned
& \big( t^{\alpha +1} \omega'(\lambda) E(\Psi) \big)_t
+ \big( t^{\alpha +1} \omega'(\lambda) F(\Psi) \big)_\theta
\\
& \quad =  
t^{\alpha} \omega'(\lambda) \, \Big( (\alpha-1) G(\Psi) + (\alpha +1) H(\Psi) \Big) 
    + t^{\alpha +2} \omega''(\lambda) \, N(\Psi). 
\endaligned
\ee 
Therefore, we obtain 
\be
\aligned
{d \over dt} \Bigg( t^\alpha {d \over dt} \Vcalt_\omega(t) \Bigg)
& =
{d \over dt} \Bigg(  t^{\alpha +1} \int_{T^1} \omega'(\lambda) \, E(\Psi)  \, d\theta
\Bigg) 
\\
& =  \int_{T^1} \Bigg(
t^{\alpha} \omega'(\lambda) \, \Big( (\alpha-1) G(\Psi) + (\alpha +1) H(\Psi) \Big) 
    + t^{\alpha +2} \omega''(\lambda) \, N(\Psi)
\Bigg)
  \, d\theta
\Bigg) 
\endaligned
\ee 
We summarize our conclusion as follows. 

\begin{proposition}[A convexity property] 
The functional $\Vcalt_{\omega}(t)$ satisfies the convexity property  
\be
{d \over dt} \Bigg( t^\alpha {d \over dt} \Vcalt_\omega(t) \Bigg) \geq 0 
\quad \text{ provided either } 
\begin{cases}
\omega' \geq 0 \text{ and } \alpha \geq 1, 
\\ 
\omega' \leq 0 \text{ and } \alpha \leq -1.  
\end{cases}
\ee 
\end{proposition}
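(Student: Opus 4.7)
The plan is to integrate identity \eqref{eq-9938-ab} in $\theta$ over the torus $T^1$ and then exploit the non-negativity of each term in the resulting integrand. The spatial flux $\bigl(t^{\alpha+1}\omega'(\lambda) F(\Psi)\bigr)_\theta$ integrates to zero by periodicity, yielding
\begin{equation*}
\frac{d}{dt}\left(t^{\alpha+1}\int_{T^1}\omega'(\lambda) E(\Psi)\,d\theta\right) = \int_{T^1} t^\alpha \omega'(\lambda)\bigl((\alpha-1)G(\Psi) + (\alpha+1) H(\Psi)\bigr)\,d\theta + \int_{T^1} t^{\alpha+2}\omega''(\lambda) N(\Psi)^2 \,d\theta.
\end{equation*}

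Next, I would observe that the left-hand side coincides with $\frac{d}{dt}\bigl(t^\alpha \frac{d}{dt}\Vcalt_\omega(t)\bigr)$. Indeed, the paper has already established that $\frac{d}{dt}\Vcalt_\omega(t) = t\int_{T^1}\omega'(\lambda) E(\Psi)\,d\theta$, so multiplying by $t^\alpha$ gives $t^\alpha \frac{d}{dt}\Vcalt_\omega(t) = t^{\alpha+1}\int_{T^1}\omega'(\lambda) E(\Psi)\,d\theta$, whose $t$-derivative is precisely the left-hand side above.

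The proof then reduces to a sign check on the right-hand side for $t>0$ (the regime in which Section~\ref{sec:3} works). The dissipation term $t^{\alpha+2}\omega''(\lambda)N(\Psi)^2$ is non-negative: the factor $t^{\alpha+2}$ is positive, $N(\Psi)^2\geq 0$ by its explicit sum-of-squares form \eqref{eq:nullformnotation}, and $\omega''\geq 0$ by the assumed (strict) convexity of $\omega$. For the remaining integrand, I use $t^\alpha>0$ and $G(\Psi),H(\Psi)\geq 0$. Under the first alternative $\omega'\geq 0$, $\alpha\geq 1$, the bracket $(\alpha-1)G+(\alpha+1)H$ is non-negative and multiplication by $\omega'\geq 0$ preserves the sign. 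Under the second alternative $\omega'\leq 0$, $\alpha\leq -1$, one has $\alpha-1\leq -2$ and $\alpha+1\leq 0$, so the bracket is non-positive and multiplication by $\omega'\leq 0$ again produces a non-negative integrand. In either case the claimed inequality $\frac{d}{dt}\bigl(t^\alpha \frac{d}{dt}\Vcalt_\omega(t)\bigr)\geq 0$ follows.

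I anticipate no substantive obstacle, since the identity \eqref{eq-9938-ab} encapsulates all the hard work (itself obtained by multiplying the energy balance \eqref{33-bal} by $t^{\alpha+1}\omega'(\lambda)$ and using the constraints \eqref{33-weakconstraints}, which give $\lambda_t/t=E(\Psi)$ and $\lambda_\theta/t=-F(\Psi)$, to process the derivatives of the weight). The only point requiring mild care is consistency in reading \eqref{eq-9938-ab}: the $\omega''$-term must be paired with $N(\Psi)^2$ rather than $N(\Psi)$, since it is the square that is non-negative; under this reading, the stated pairs of conditions on $(\omega',\alpha)$ are exactly what is required to force both integrands to have the correct sign.
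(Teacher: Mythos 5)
Your proof is correct and follows the paper's own route: the paper likewise integrates \eqref{eq-9938-ab} over $T^1$, identifies the left-hand side with ${d \over dt}\big( t^\alpha {d \over dt} \Vcalt_\omega(t) \big)$ via the formula for ${d \over dt}\Vcalt_\omega$, and concludes by the same sign check on the two terms (with the $\omega''$-term indeed to be read as multiplying $N(\Psi)^2$, consistent with \eqref{eq-9938}, as you note). Nothing further is needed.
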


\begin{remark}
While the integrability property was established in \cite{LeFlochSmulevici-2016}, the argument therein was different: it was not geometric-in-nature and was based on a splitting the energy into left-moving and right-moving parts. 
\end{remark}


\section{The Euler equations in Gowdy symmetry}
\label{sec:4}

\subsection{Spatially homogeneous solutions}
\label{homog_solutions}

In contrast with solutions to the Gowdy equations (with $a$ constant), solutions to the Euler-Gowdy system need not be defined for all values of the time variable $t$. We explain this fact by considering first the special class of spatially homogeneous solutions whose velocity component vanishes identically ($v=0$).
They are characterized by the equations 
\bel{ode_h} 
\aligned
\big(a^{-1}t\rho\big)_t 
& = a^{-1} t\rho(1-k^2) \, \bigg(-\frac{3 k^2 +1}{4t (1-k^2)} + atE_1(t) \bigg),
\\
\big(a^{-1}t(U_t - 1/(2t))\big)_t 
& =  \frac{e^{4U}}{2at}A_t^2, 
\\
\big(a^{-1}t^{-1}e^{2U}A_t \big)_t 
& = - 2\frac{e^{2U}}{a\, t} U_t A_t,  
\\
a_t & = -at\rho (1-k^2),
\endaligned
\ee 
where, in agreement with our earlier notation, 
\be
E_1 =  {1 \over a} \, \Bigg( \big(U_t - \frac{1}{2t}\big)^2 + \frac{e^{4U}}{4t^2} A_t^2 \Bigg). 
\ee
This first-order differential system in $\rho, a, U_t, A_t$ admits  a conserved energy, i.e. 
\bel{energy1}
\frac{d}{dt} \Big( a^{-1}t^2E_1(t) \Big) = 0, 
\ee
which turns out to play an essential role in the global behavior of this dynamical system.  

Namely, let us fix some time $t_0<0$. A direct analysis of \eqref{ode_h}  leads us to the following conclusion: 
when $E_1(t_0) > 0$ the solution exists on the whole interval $[t_0, 0)$, while 
when $E_1(t_0) = 0$ the evolution depends upon the value of the initial fluid density.
Our analysis is conveniently performed in terms of the normalized density 
\be
m \coloneqq \frac{4}{3}t^2\rho
\ee
and with respect to the rescaled time 
\bel{eq=newtime}
\tau = - \frac{3}{4}(1-k^2) \ln \big(\frac{t}{t_0}\big), \qquad \tau\in[0, + \infty). 
\ee 
In terms of the unknowns $(a, m)$ regarded  as  unknown functions of the variable $\tau$, the last two equations in \eqref{ode_h} take the form  
\bel{sys1}
\aligned
\frac{d(a^{-1})}{d\tau} & = -a^{-1}\,m,
\\ 
\frac{d(a^{-1}m)}{d\tau} & = -C_0a\,m   - a^{-1} \, m,
\endaligned
\ee
in which, in view of \eqref{energy1},  $C_0 \coloneqq \frac{4}{3}t_0^2E_1(t_0)a_0^{-1}$.  

We find that spatially homogeneous solutions to the Euler-Gowdy system are classified as follows: 
\begin{description}

\item[(A)] If $E_1(t_0) \neq 0$, then the function $a$ and, consequently, $tU_t, A_t, m$ remain globally bounded up to the time $t=0$ and, moreover, $m \rightarrow 0$ as $t \rightarrow 0$.

\item[(B)] If $E_1(t_0) = 0$, we have
$$
U_t=\frac{1}{2t},  \qquad A_t=0,  \qquad  \frac{m}{a} \big(\frac{t_0}{t}\big)^{3(1-k^2)/4}  =  \frac{m_0}{a_0}, \qquad \frac{1-m}{a} = \frac{1-m_0}{a_0}.
$$ 
Depending upon the initial value $m_0 = m(t_0)$, three cases may arise: 

\begin{itemize} 

\item[(i)] When $m_0<1$, 
the function $a$ remains bounded while $m \rightarrow 0$ as $t \rightarrow 0$. 

\item[(ii)] When $m_0=1$, the rescaled density 
 $m = 1$ is constant while $a = a_0\big(\frac{t_0}{t}\big)^{3(1-k^2)/4}$ blows up.

\item[(iii)] When $m_0>1$, both the function $a$ and the rescaled density $m$ blow up at some \textsl{non-vanishing critical time} $\tmax \in (t_0, 0)$. 
\end{itemize}

\end{description}

\noindent This shows that solutions to the Euler-Gowdy system may be defined on a bounded interval of (areal) time.
We only need an argument for the case (A): we observe that a linear combination of the two equations in \eqref{sys1} gives us 
\be
{d \over dt} \Big( C_0 a + a^{-1} m - a^{-1} \Big) = 0, 
\ee
so that studying \eqref{sys1} is actually equivalent to studying the blow-up problem for the scalar equation
\be
{da \over d\tau} = a \, \big( -C_0 a^2 + D_0 a + 1 \big) = P(a),
\qquad \tau >0, 
\ee
with $D_0 \coloneqq C_0 a_0 + a^{-1}_0 (m_0 - 1)$. 
This polynomial $P(a)$ is positive in an interval $(0, a_+)$ containing $a_0$, with $P(a_+)=0$.
As $\tau$ increases, the solution $a=a(\tau)$ increases and near the value $a_+$ the equation behaves like 
${da \over d\tau} \simeq P'(a_+)\, (a_+ - a)$, and the solution is globally defined in $\tau$.


\subsection{Weak solutions to the Euler-Gowdy system} 

We define our notion of initial data and weak solution for the Euler-Gowdy system as follows, by imposing that solutions have a finite energy which we now define. For simplicity in the presentation, the coefficients $a$ and $1/a$ are taken to be locally bounded functions (this is not necessary).

\begin{definition} 
\label{def-weak-Gowdy-data-Euler}
Consider the Euler-Gowdy equations 
 for the five unknowns $\rho, v, U, A, a$, and fix some time $t_0 \neq 0$.
A set of measurable functions $(\brho, \bv, \bara, \barU, \barU_0, \barA, \barA_0)$ defined on $T^1$ 
with $\brho \geq 0$, $v \in (-1, 1)$, and $a >0$ 
is called an \emph{essential Euler-Gowdy initial data set}
if   
\bel{33-intialregu-Euler}  
\aligned
& {\brho \over 1 - \bv^2} \in L^1(T^1), 
\qquad 
\bara, \, {1 \over \bara} \in L^\infty_\loc(T^1), 
\\
&
\barU, \, \barA \in H^1(T^1),
\qquad 
\barU_0, \, \barA_0 \in L^2(T^1). 
\endaligned
\ee 
Furthermore, such a set of functions $(\brho, \bv, \bara, \barU, \barU_0, \barA, \barA_0, \barnu, \barnu_0)$  defined on $T^1$ 
is called a \emph{Euler-Gowdy initial data set} if, in addition, 
\bel{33-weakconstraints-calcul-Euler}
\aligned 
& \barnu_0  = \barU_0 + {\bSigma_0 \over t_0}  
+ t_0 \, \brho \, \frac{k^2 + \bv^2}{1 - \bv^2},
\\
& \barnu_\theta  = \barU_\theta + {\bSigma_1 \over t_0}  
                             - t_0 \, {\brho \over \bara} \, \frac{(1+k^2) \bv}{1 - \bv^2}, 
\endaligned
\ee
in which 
\be 
\aligned 
\bSigma_0 
&\coloneqq
  - t_0 \, \barU_0 + t_0^2 \, \big( \barU_0^2 + \bara^2 \barU_\theta^2 \big) + {1 \over 4} e^{4\barU} (\barA_0^2 + \bara^2 \barA_\theta^2),
\\
\bSigma_1 
&\coloneqq 
- t_0 \, \barU_\theta + 2t_0^2 \, \barU_0 \, \barU_\theta + {1 \over 2} e^{4 \barU} \, \barA_0 \barA_\theta.
\endaligned
\ee
\end{definition}

Observe that, under the conditions given in the above definition, we deduce
\be
\barnu \in W^{1,1}(T^1), \qquad \barnu_0 \in L^1(T^1).
\ee
Recall that we are interested in solutions defined in a time interval $I=[t_0, \tmax)$ included either in $(-\infty, 0)$ or in 
$(0, +\infty)$. 

\begin{definition}
\label{def-weak-Gowdy-solution-Euler} 
Given any interval $I=[t_0, \tmax)$, a set of measurable functions $(\rho, v, a, U, A, \nu)$ defined on $I \times T^1$ and satisfying
$\rho \geq 0$, $v \in (-1, 1)$, and $a>0$ with
\bel{eq-weakPQ-Euler}
\aligned
& {\rho \over 1- v^2} \in L_\loc^\infty(I, L^1(T^1)),
\qquad
a, \, {1 \over a} \in L_\loc^\infty(I \times T^1),
\\
& U, A \in L_\loc^\infty(I, H^1(T^1)),
\qquad
U_t, A_t \in L_\loc^\infty(I, L^2(T^1)),
\endaligned
\ee
is called a \emph{weak solution to the Euler-Gowdy system} if the equations
\eqref{eq:use} and \eqref{evolution-I}--\eqref{constraint-III} hold in the sense of distributions. 
\end{definition}

Under the conditions in the above definition, we can deduce that
the equations \eqref{evolution-II} and \eqref{constraint-III} hold for some function $\nu$ such that
\bel{eq-weaklambda-Euler}
\aligned 
& \nu \in L_\loc^\infty(I, W^{1,1}(T^1)), 
\qquad 
\nu_t \in L_\loc^\infty(I, L^1(T^1)). 
\endaligned
\ee
Furthermore, we are interested in the initial value problem with the following 
prescribed initial condition (understood again in the distributional sense): 
\be
\aligned
\rho(t_0,\cdot) &= \brho, \quad &&&  v(t_0, \cdot) &= \bv, \quad &&&&& a(t_0, \cdot) &= \bara, 
\\ 
U(t_0,\cdot) &= \barU, \quad &&& A(t_0,\cdot) &= \barA, \quad &&&&& \nu(t_0,\cdot) &= \barnu,
\\
U_t(t_0,\cdot) &= \barU_0, \quad &&& A_t(t_0,\cdot) &= \barA_0,  \quad  &&&&& \nu_t(t_0,\cdot) &= \barnu_0.
\endaligned
\ee
The same remark as made in \eqref{eq:3130}--\eqref{eq:31333} applies here.


\subsection{A space-time integrability property for fluids}
\label{sec:42fluid} 

The results in Section~\ref{sec:Gowdy-Volu} are now extended to the Euler-Gowdy system. A major difference is that, in the Gowdy equations, the function $a$ is no longer constant but is defined from the mass density $\rho$ of the fluid. Moreover, we must also take into account terms involving the mass energy density and the velocity and, interestingly, we conclude with a bound on both the fluid and the geometric variables. We focus here on a priori estimates satisfied by weak solutions, while the existence theory will be the subject of \cite{LeFlochLeFloch-2}. 

As already explained, solutions are defined only on an interval of the form $I= [t_0, \tmax)$ where 
$t_0$ can be negative or positive. 
Consider the functional  
\be
\Vcalt(t) 
\coloneqq \abs{t}^{-3/4}  \int_{(T^1)^3} \sqrt{e^{2(\nu-U)} t^2\over a^2} \, dx^2 dx^3 d\theta
=
\abs{t}^{1/4} \int_{T^1} {e^{\nu-U}  \over a} \, d\theta. 
\ee 
We motivate our definition by observing that, using \eqref{constraint-III} and the definitions of $e$, $S$ and $T$,
\bel{eq:numoinsU}
(\nu-U)_t  
=  {\Sigma_0 \over t} + t\rho \, \frac{k^2 + v^2}{1 - v^2}
= t\, a\, e -{1 \over 4t} + t\, a\, S
= t \,  a (e+T) + {a_t \over a} - {1 \over 4t}. 
\ee
It thus follows that 
\be
{d \over dt} \Vcalt(t) = (\sign t) \, \abs{t}^{5/4}\int_{T^1} e^{\nu-U} \, (e+T) \, d\theta, 
\ee
and this motivates us to also propose the following weighted energy functional for the Euler-Gowdy system,
where $\alpha$ is a real parameter fixed later:
\be 
\Ecalt(t) \coloneqq  \abs{t}^{\alpha} \int_{T^1} e^{\nu-U} \, (e+T) \, d\theta.
\ee 

\paragraph*{The integrability property.} 
In view of \eqref{eq:use}, we compute 
\be
\aligned
& \Big( \abs{t}^\alpha e^{\nu-U} ( e + T) \Big)_t + \Big( \abs{t}^\alpha a e^{\nu-U} (f + M) \Big)_\theta 
\\
& = 
\abs{t}^\alpha e^{\nu-U} X
+ (\sign t)\,\abs{t}^{\alpha-1} e^{\nu-U}\biggl(\alpha(e+T) - 2 \, \et - T - (3k^2 + 1){\rho \over 4a}\biggr),
\endaligned
\ee
with 
\be
\aligned
X
&\coloneqq 
(\nu-U)_t \, (e + T)  + (\nu-U)_\theta \, a \, (f + M)   
\\
&= 
\bigg(  t \,  a \, (e+S) - {1 \over 4t} \bigg)\,( e + T) 
- t \, a \, (f + M)^2,
\endaligned
\ee
where we used \eqref{eq:numoinsU} and the following consequence of \eqref{constraint-III} and \eqref{eq:deffandet}:
\bel{eq:numoinsU-theta}
(\nu-U)_\theta
= {\Sigma_1 \over t}  - t \, {\rho \over a} \, \frac{(1+k^2)v}{1 - v^2}
= - t \, (f+M).
\ee

In other words, we have found 
$$
\aligned
& \Big( \abs{t}^\alpha e^{\nu-U} ( e + T) \Big)_t + \Big( a e^{\nu-U} (f + M) \Big)_\theta 
\\
& =
(\sign t)\,\Biggl( \abs{t}^{\alpha+1} e^{\nu-U} a\, \Bigl( (e+T)(e+S)  - (f + M)^2 \Bigr)
\\[-2ex]
& \qquad \qquad \quad
+ \abs{t}^{\alpha-1} e^{\nu-U}\biggl(\Bigl(\alpha-\frac{1}{4}\Bigr)(e+T) - 2 \, \et - T - (3k^2 + 1){\rho \over 4a}\biggr) \Biggr).
\endaligned
$$
Let us point out that for large enough~$\alpha$ the lower-order term (namely, the last line) is positive since $e\geq\et\geq 0$ and $T \geq \rho/a>0$.
This positivity is not essential: without it the lower-order terms could also be
controlled (on any compact interval of time) by the integrand of our functional.

Next, we write
\be
(e+T)(e+S) - (f + M)^2
= (e+f)(e-f) + (e(T+S) - 2fM) + (TS - M^2)
\ee
as a sum of nonnegative terms.
First, all four terms in the following product are squares
$$
\aligned
(e+f)(e-f)
& =
\Bigg(
{1 \over a} \, \big(U_t - \frac{1}{2t} - a U_\theta \big)^2  
               + \frac{e^{4U}}{4 a t^2}\big(A_t - a A_\theta\big)^2 
\Bigg)
\Bigg(
{1 \over a} \, \big(U_t - \frac{1}{2t} + a U_\theta \big)^2  
               + \frac{e^{4U}}{4 a t^2}\big(A_t + a A_\theta\big)^2 
\Bigg) \\
& =
\bigg( {1 \over a} \, \Big(U_t - \frac{1}{2t}\Big)^2 - a U_\theta^2 \bigg)^2
+ \frac{e^{4U}}{2 a^2 t^2}
\bigg( A_t \big(U_t-\frac{1}{2t}\big) - a^2 A_\theta U_\theta \bigg)^2
\\
& \quad
+ \frac{e^{4U}}{2 t^2}
\bigg( \big(U_t - \frac{1}{2t} \big) A_\theta - U_\theta A_t \bigg)^2
+ \frac{e^{8U}}{16 a^2 t^4} \big(A_t^2 - a^2 A_\theta^2\big)^2
\endaligned
$$
Second, $e\pm f\geq 0$ and
$$
T + S \pm 2 M = \frac{\rho(1+k^2)(1\pm v)}{a(1\mp v)} \geq 0
$$
so $e(T+S)\geq\abs{f}\,\abs{2M}\geq 2fM$.
Third, we compute
$$
TS - M^2 = \frac{k^2\rho^2}{a^2}.
$$

Similarly to our analysis of the Gowdy equation, we reach the following conclusion. 
As far as the integrability is concerned, our basic energy bounds show that the shift of $U_t$ by
$\frac{1}{2t}$ can be suppressed, by writing for instance (wth Cauchy-Schwarz inequality) 
$\|  U_t (U_t^2-a^2U_\theta^2) \|_{L^1 (T^1)}^2   
\leq \|  U_t \|_{L^2 (T^1)}  \, 
\|  U_t^2-a^2U_\theta^2 \|_{L^2 (T^1)}$. 

\begin{theorem}[Integrability property for the Euler-Gowdy system] 
\label{theo:higherinte-Euler}
Weak solutions to the Euler-Gowdy system satisfy
\bel{eq-high-int-Euler} 
{1 \over a} \rho, \,
{1 \over a} U_t^2 - a U_\theta^2, \, 
{1 \over a} A_t^2 - a A_\theta^2, \,
{1 \over a} U_t A_t - a U_\theta A_\theta, \,
U_t A_\theta - U_\theta A_t
\in L^2_\loc (I \times T^1).
\ee 
\end{theorem} 


\section{Nonlinear stability of weakly regular Gowdy solutions}
\label{sec:5}

\subsection{Weakly converging sequences of \HoneWoneone\ solutions}

From now on we focus on the case $a=1$ and postpone the analysis of the coupling to the fluid equations. However, we emphasize that all of the results we establish here have an analogue for the Euler-Gowdy system. The arguments of proof are significantly more involved and are the subject of \cite{LeFlochLeFloch-2}. We are interested in analyzing sequences of Gowdy solutions $(P^n, Q^n)$ that remain uniformly bounded in the $H^1$ norm and establishing stability and instability statements. We emphasize that the instability result below 
can be turned into a stability statement once the initial data set is taken into account. (See~Section~6.) 

Recall that a sequence of functions $f^n: T^1 \to \RR$ which are uniformly bounded in the $L^2$ norm 
are said to \textsl{converge weakly in $L^2$} toward some limit $f^\infty$ if, for every function $\varphi \in L^2(T^1)$, 
\be
\int_{T^1} f^n \varphi \, dx \to \int_{T^1} f^\infty \varphi \, dx. 
\ee

\begin{theorem} 
\label{theo-220}
Gowdy solutions determined by unknown coefficients $P, Q, \lambda$ satisfy the following properties: 
\bei 

\item[1.] \emph{Stability property for the essential Gowdy equations.} 
Any sequence of weak solutions $\Pn, \Qn \in L^\infty(I, H^1(T^1))$ to the essential Gowdy equations, 
satisfying  the uniform energy bound
\bel{33-enerbound} 
\aligned
& \limsup_{n \rightarrow + \infty} \| (\Pn, \Qn, \Pn_t, \Qn_t, \Pn_\theta, \Qn_\theta) \|_{L^\infty(I, L^2(T^1))} < + \infty,  
\endaligned
\ee
subconverges, at least weakly in $H^1(I \times T^1)$, 
to a limit $(\Pinfty,\Qinfty)$ which is a solution to the essential Gowdy equations. 

\item[2.] \emph{Instability property for the full set of Gowdy equations.} The property above \textsl{does not hold} for the full set of evolution and constraint equations, that is, under the bound \eqref{33-enerbound} the limits $(\Pinfty,\Qinfty)$ 
need not satisfy the full set of Einstein equations. 
\eei 
\end{theorem}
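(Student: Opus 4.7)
The plan is to treat the two items separately. For item 1 I rely on a compactness argument based on the divergence formulation \eqref{eq:33-8} of the essential Gowdy equations, while for item 2 I exhibit an explicit oscillatory counterexample in the polarized sector.

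For item 1, starting from the uniform bound \eqref{33-enerbound}, I would extract a (not relabeled) subsequence so that $(\Pn,\Qn)$ converges weakly in $H^1_\loc(I\times T^1)$ to some limit $(\Pinfty,\Qinfty)$ and the first derivatives converge weakly in $L^2_\loc$. Aubin--Lions-type compactness, combined with the compact Sobolev embedding $H^1(T^1)\hookrightarrow C^0(T^1)$ applied slice-by-slice in time, yields strong convergence of $\Pn,\Qn$ in every $L^p_\loc(I\times T^1)$, $p<\infty$, together with a uniform $L^\infty$ bound. The continuous nonlinear factors $e^{2\Pn}$ and $e^{2\Pn}\Qn$ then converge strongly as well. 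In the divergence form \eqref{eq:33-8}, every quantity appearing under a distributional derivative is of the type (strong limit)$\times$(weak $L^2$ limit of a first derivative), for instance $e^{2\Pn}\Qn\cdot\Qn_t$; such strong-times-weak products converge weakly in $L^r_\loc$ for a suitable $r$, and distributional differentiation is continuous under weak convergence, so $(\Pinfty,\Qinfty)$ satisfies \eqref{eq:33-8} in the sense of distributions, hence the essential Gowdy equations.

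For item 2, the mechanism I would exploit is that the constraints \eqref{33-weakconstraints} involve quadratic forms in the first derivatives of $(P,Q)$, which are not weakly sequentially continuous. I would construct a counterexample in the polarized sector $\Qn\equiv 0$, where the second equation of \eqref{33-eq2-FULL} is trivially satisfied and the first reduces to the \emph{linear} Bessel-type wave equation $P_{tt}-P_{\theta\theta}+P_t/t=0$, whose Fourier-in-$\theta$ modes are expressible through Bessel functions. Choosing $\Pn(t,\theta)=n^{-1/2}J_0(nt)\cos(n\theta)$, the asymptotic $J_0(x)\sim\sqrt{2/(\pi x)}\cos(x-\pi/4)$ gives $\Pn\to 0$ uniformly on compact subsets of $I\times T^1$, while $\Pn_t,\Pn_\theta$ remain bounded in $L^2_\loc$ and converge to $0$ only weakly. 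A direct computation based on those asymptotics shows that $(\Pn_t)^2+(\Pn_\theta)^2$ converges in the distributional sense to the strictly positive density $\mu(t)=1/(\pi t)$. Defining $\lambda^n$ via the constraint \eqref{33-weakconstraints}, one obtains a sequence $(\Pn,\Qn,\lambda^n)$ of full Gowdy solutions whose limit $(\Pinfty,\Qinfty,\lambda^\infty)=(0,0,\lambda^\infty)$ with $\lambda^\infty_t=1/\pi$ violates the first constraint in \eqref{33-weakconstraints}, whose right-hand side evaluates to $0$ on $(\Pinfty,\Qinfty)$. The discrepancy between $\lambda^\infty_t/t$ and that right-hand side plays the role of a ``spurious matter'' source.

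The main obstacle is the quantitative defect-measure computation of item 2: I must carry out the Bessel/WKB asymptotics precisely enough to confirm that the oscillatory quadratic integrand does not average to zero, and simultaneously check that $\Pn$ subconverges weakly in $H^1$ with the required uniform bound despite its highly oscillatory nature. The passage to the limit in item 1 is more routine once the correct framework is in place; the only subtle point there is recognizing and using the strong--weak product structure inside the divergence form \eqref{eq:33-8}.
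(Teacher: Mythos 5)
Your item 1 is essentially the paper's argument: the paper also passes to the limit through the conservative structure (it uses the first-order $(R,Q)$ formulation \eqref{eq:33-8-deux}, which is just a rewriting of \eqref{eq:33-8}), invoking exactly the stability of strong-times-weak products, with the strong convergence of the coefficient $e^{2P^n}$ (resp.\ $\Omega(R^n,Q^n)^{-2}$) supplied by compact Sobolev embedding; your Aubin--Lions phrasing is an equivalent packaging of the same compactness.

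For item 2 you take a genuinely different, and in fact more constructive, route. The paper's proof is soft: it only observes that $\lambda^n$ converges in $BV$ while the quadratic right-hand sides of the constraints \eqref{33-weakconstraints} converge merely to bounded measures, which ``need not coincide'' with the quadratic expressions of $(P^\infty,Q^\infty)$ --- the mechanism of failure, made quantitative only later via the energy defect field of Theorem~\ref{theo-4400}. You instead exhibit an explicit oscillatory family in the polarized sector, where the essential system degenerates to the linear Bessel equation $P_{tt}-P_{\theta\theta}+P_t/t=0$, and compute the defect measure directly from the $J_0$, $J_1$ asymptotics; your value $\tfrac{1}{\pi t}$ for the weak limit of $(P^n_t)^2+(P^n_\theta)^2$ is correct with your normalization, and the resulting $\lambda^\infty_t=1/\pi\neq 0$ does violate the first constraint for the limit $(0,0)$. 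This buys an actual counterexample rather than a plausibility argument, at the cost of the WKB bookkeeping you anticipate. Three small points to tidy up, none of which is a genuine gap: (i) periodicity on $T^1\simeq[0,1]$ forces frequencies $2\pi n$, so take $P^n=n^{-1/2}J_0(2\pi nt)\cos(2\pi n\theta)$ (the defect constant changes, its positivity does not); (ii) the uniform bound \eqref{33-enerbound} holds only on intervals $I$ compactly contained in $(0,+\infty)$, since $P^n_\theta$ blows up in $L^2$ as $t\to 0$ for fixed large $n$ --- this suffices, as one counterexample on one interval proves ``need not''; (iii) to promote $(P^n,0)$ to a full Gowdy solution you must check the closure condition \eqref{33-integralnulle} so that $\lambda^n$ is well defined on the torus, which holds here because $\int_{T^1}\cos(2\pi n\theta)\sin(2\pi n\theta)\,d\theta=0$.
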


\begin{proof}[Proof of 1] 
Under the bound \eqref{33-enerbound} and by a standard diagonal argument, we can extract a subsequence of $(\Pn, \Qn)$ such that 
\be
\aligned
& (P^n, Q^n, P_t^n, Q_t^n, P_\theta^n,  Q_\theta^n)(t, \cdot) \rightharpoonup (P^\infty, Q^\infty, P_t^\infty, Q_t^\infty, P_\theta^\infty, Q_\theta^\infty)(t, \cdot)
\\
&  \text{ in the weak $ L^2(T^1)$ topology for all rational  times $t \in I$.}  
\endaligned
\ee 
In addition, the bound on the time derivative implies that this convergence holds at all times:  
\be
\aligned
& (P^n, Q^n, P_\theta^n,  Q_\theta^n)(t, \cdot) \rightharpoonup (P^\infty, Q^\infty, P_\theta^\infty, Q_\theta^\infty)(t, \cdot)
\\
&    \text{ in the weak $ L^2(T^1)$ topology for all times $t \in I$.}  
\endaligned
\ee
We also have 
\be
\aligned
& (P_t^n, Q_t^n) \rightharpoonup (P_t^\infty, Q_t^\infty)
\quad 
\text{ in the weak $L^2(I \times T^1)$ topology.}  
\endaligned
\ee
Next, we observe that our first-order formulation  \eqref{eq:33-8-deux}, that is, 
\bel{eq:33-8-deux-n} 
\aligned
\big( t \, \Omega(R^n,Q^n)^{-2}   R^n_t \big)_t - \big( t \, \Omega(R^n,Q^n)^{-2}   R^n_\theta \big) _\theta = 0, 
\\
\big( t \, \Omega(R^n,Q^n)^{-2} Q^n_t \big)_t - \big( t \, \Omega(R^n,Q^n)^{-2} Q^n_\theta \big)_\theta  = 0, 
\endaligned
\ee  
involves only products of weakly converging functions by strongly converging functions in the $L^2(I \times T^1)$ topology. 
Such products are stable under weak convergence and we obtain 
$$
\aligned
& \Omega(R^n,Q^n) \to \Omega(R^\infty,Q^\infty) && \text{ strongly in $L^2(T^1)$ for all times,} 
\\
& \Omega(R^n,Q^n)^{-2}   R^n_t \rightharpoonup \Omega(R^\infty,Q^\infty)^{-2}   R^\infty_t  \quad 
&& \text{ weakly in $L^2(I \times T^1)$.}  
\endaligned
$$
We thus deduce that 
\bel{eq:33-8-deux-infty} 
\aligned
\big( t \, \Omega(R^\infty,Q^\infty)^{-2}   R^\infty_t \big)_t - \big( t \, \Omega(R^\infty,Q^\infty)^{-2}   R^\infty_\theta \big) _\theta = 0, 
\\
\big( t \, \Omega(R^\infty,Q^\infty)^{-2} Q^\infty_t \big)_t - \big( t \, \Omega(R^\infty,Q^\infty)^{-2} Q^\infty_\theta \big)_\theta  = 0,
\endaligned
\ee   
so that the essential Gowdy equations hold in the limit. 
\end{proof}

\begin{proof}[Proof of 2]
Consider the full set of Einstein equations. 
The sequence of coefficients $\lambda^n$ computed from $(P^n, Q^n)$ is bounded in $W^{1,1}$ and, therefore, converges to a function of bounded variation, which we denote by $\lambda^\infty$, so 
\be
\lambda^n \to \lambda^\infty \qquad \text{ almost everywhere.} 
\ee
From the constraint equations 
\be
\aligned
{1 \over t} \lambda^n_t 
& = (P_t^n)^2 + (P_\theta^n)^2  
            + e^{2P^n} \big( (Q_t^n)^2 + (Q_\theta^n)^2 \big), 
\\
{1 \over t} 
\lambda^n_\theta 
& = 2 P_t^n P_\theta^n + 2 e^{2P^n} Q^n_t Q^n_\theta,
\endaligned
\ee  
we can solely conclude that 
\be
\aligned
{1 \over t} \lambda^\infty_t 
& = \lim_{n \rightarrow + \infty} \Big((P_t^n)^2 + (P_\theta^n)^2  
            + e^{2P^n} \big( (Q_t^n)^2 + (Q_\theta^n)^2 \big) \Big), 
\\
{1 \over t} 
\lambda^\infty_\theta 
& = \lim_{n \rightarrow + \infty} \Big(2 P_t^n P_\theta^n + 2 e^{2P^n} Q^n_t Q^n_\theta \Big). 
\endaligned
\ee   
The right-hand sides of these two equations converge to \textsl{bounded measures} and, in particular need not coincide with 
the expressions computed from $(P^\infty, Q^\infty)$. 
\end{proof}


\subsection{The nonlinear stability of \HoneWoneone\ Gowdy solutions} 

By now requiring that the given initial data set converges strongly in the norm under consideration, we prove that the full set of Einstein equations can be established, which therefore completes the study of Gowdy spacetime in the $H^1$-$W^{1,1}$ regularity class we have proposed in Definitions~\ref{def-weak-Gowdy-data} and \ref{def-weak-Gowdy-solution}.

\begin{theorem}[Well-posedness theory for $H^1$-$W^{1,1}$ regular Gowdy spacetimes]
\label{33-theoG}   
Consider the class of weakly regular Gowdy solutions:  
\bei 

\item \emph{Existence and uniqueness.} 
Given any $H^1$-$W^{1,1}$ initial data set $(\barP_0, \barP, \barQ_0, \barQ, \barlambda_0, \barlambda)$ prescribed at some time $t_0$, there exists a $H^1$-$W^{1,1}$ weak solution $(P,Q, \lambda)$ to the initial value problem for the Gowdy equations, which is unique and is defined for all times $t \in I=(0, + \infty)$.

\item \emph{Nonlinear stability.} 
If $(P, Q, \lambda)$ and $(P',Q', \lambda')$ are two $H^1$--$W^{1,1}$ weak solutions  
 defined for all $t \in I$, then the following inequality holds
\bel{eq:H494}
\aligned
& \dbf_{H^1W^{1,1}}\big( (P,Q,\lambda), (P',Q',\lambda') \big)(t) \lesssim \dbf_{H^1W^{1,1}}\big( (P,Q,\lambda), (P',Q',\lambda') \big)(t_0), \qquad t \in I, 
\endaligned
\ee
in terms of the ``distance''  
\bel{eq:ourdist}
\aligned
\dbf_{H^1W^{1,1}}\big( (P,Q,\lambda), (P',Q',\lambda') \big)(t)
\coloneqq {}
&
\| (P - P') (t), (Q-Q')(t) \|_{H^1(T^1)} 
\\
&
+
\| (P_t-P_t') (t), (Q_t-Q_t') (t) \|_{L^2(T^1)} 
\\
&
+
\| (\lambda - \lambda') (t) \|_{W^{1,1}(T^1)}
+
\| (\lambda_t - \lambda_t') (t) \|_{L^1(T^1)}.
\endaligned
\ee
Here, $t_0 \in I$ is any chosen initial time and the implied constant is uniform on every compact interval of time and depends upon the $H^1$--$W^{1,1}$ norm of the solutions.
\eei 
\end{theorem}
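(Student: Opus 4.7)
The plan is to prove existence by smoothing the initial data and invoking the energy estimates of Section~\ref{sec:Gowdy-Volu}, and to prove uniqueness together with the stability bound \eqref{eq:H494} by a single energy-type estimate on the difference of two solutions in the first-order conservative formulation of Section~\ref{sec:Gowdy-conserv}.

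For existence, I first regularize $(\barP, \barQ, \barP_0, \barQ_0)$ by mollification to obtain $C^\infty$ periodic approximations converging strongly in $H^1 \times H^1 \times L^2 \times L^2$. The system \eqref{33-eq2-FULL} is a semilinear wave system of classical type, so smooth data yield local-in-time smooth solutions. The basic energy identity \eqref{33-bal}, together with the forward bound \eqref{eq-238} and backward bound \eqref{eq-239}, controls the quadratic variable in $L^\infty_\loc(I, L^2(T^1))$; in dimension one the embedding $H^1(T^1)\hookrightarrow L^\infty(T^1)$ then bounds $e^{\pm 2P^n}$ uniformly on compact time intervals, which prevents finite-time blow-up and promotes the local smooth solutions to global ones on $I=(0,+\infty)$. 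Theorem~\ref{theo-220}(1) next delivers weak subsequential limits $(P,Q)$ that solve the essential Gowdy equations distributionally, and the auxiliary function $\lambda$ is reconstructed via the integral formula of Remark~\ref{rem-33}; the building blocks $P_t^2,\, P_\theta^2,\, e^{2P}Q_t^2,\, e^{2P}Q_\theta^2,\, P_tP_\theta,\, e^{2P}Q_tQ_\theta$ all lie in $L^1_\loc(I\times T^1)$ by the energy bound, so $\lambda^n$ is bounded in $L^\infty_\loc(I, W^{1,1}(T^1))$. The improved time-continuity \eqref{eq:3130} follows from the weak form of \eqref{33-bal}, and strong convergence of the first-order variables (and thus passage to the limit in the constraints \eqref{33-weakconstraints}) comes out as a byproduct of the stability estimate below.

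The core step is the stability bound \eqref{eq:H494}, which also subsumes uniqueness. Given two weak solutions, I set $V=R-R'$ and $W=Q-Q'$ in the first-order formulation \eqref{eq:33-9} and use the algebraic identity $\Omega^2-(\Omega')^2 = V-(Q+Q')W$ together with the decomposition
\[
\Omega^{-2}R_t-(\Omega')^{-2}R'_t = \Omega^{-2}V_t + \bigl(\Omega^{-2}-(\Omega')^{-2}\bigr)R'_t.
\]
Because $P,P'\in L^\infty_\loc$ by Sobolev embedding, $\Omega,\Omega'$ are bounded and bounded away from zero on compact time intervals, so $(V,W)$ satisfies a linear wave system with $L^\infty$ coefficients and source terms that are bilinear in $(V,W)$ and the first-order variables of the two solutions. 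Multiplying by the symmetrizing weight $t\Omega^{-2}$ and integrating in $\theta$ yields a Gronwall differential inequality
\[
\frac{d}{dt}\Dcal(t)\lesssim \bigl(1+\|\Phi\|_{L^2}^2+\|\Phi'\|_{L^2}^2\bigr)\Dcal(t),
\]
where $\Dcal$ is equivalent to $\|(P-P')(t),(Q-Q')(t)\|_{H^1}^2 + \|(P_t-P'_t)(t),(Q_t-Q'_t)(t)\|_{L^2}^2$. The $W^{1,1}$ and $L^1$ estimates for $\lambda-\lambda'$ and $\lambda_t-\lambda'_t$ then follow by subtracting the constraints \eqref{33-weakconstraints} and using that each quadratic monomial in the first-order variables is Lipschitz in the $H^1\times L^2$ norm with constant controlled by the $H^1$ norms of $(P,Q)$ and $(P',Q')$.

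The main technical obstacle is the gap between the $L^2$ regularity of the first-order variables and the $L^1$ right-hand sides $e^{2P}(Q_t^2-Q_\theta^2)$ of \eqref{33-eq2-FULL}. What resolves it is precisely the structure of the first-order formulation \eqref{eq:33-9}: the weakly converging factors $\Omega^{-2}R_t$ etc.\ are multiplied by the coefficient $\Omega^{-2}$ which, depending only on $(P,Q)$, converges \emph{strongly} in every $L^p$ by the one-dimensional Rellich--Kondrachov compactness, so the limit passage is legitimate. A secondary point is to propagate the integral condition \eqref{33-integralnulle} along the approximation, which reduces to the conservation in time of $\int_{T^1} t\bigl(P_tP_\theta + e^{2P}Q_tQ_\theta\bigr)\,d\theta$ and follows at once from the flux form of the second equation in \eqref{eq:use0} with $\rho=0,\,a=1$.
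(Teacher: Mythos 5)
Your overall architecture (smooth the data, solve classically, pass to the limit, and let a difference estimate give uniqueness and stability) matches the paper's Step~1. The genuine gap is in the core stability step. When you write the difference system for $(V,W)=(R-R',Q-Q')$ in the first-order formulation and perform the energy estimate, the error terms are \emph{cubic} in first-order quantities: differentiating the coefficient $t\,\Omega^{-2}$ in time and space produces factors $R_t,R_\theta,Q_t,Q_\theta$ that are only $L^2$ in $\theta$ at each time, multiplying the quadratic difference energy density $V_t^2+V_\theta^2+\dots$, which is only $L^1$ in $\theta$; likewise the source $\bigl(\Omega^{-2}-(\Omega')^{-2}\bigr)R'_t$, once differentiated (or integrated by parts to avoid the distributional $R'_{tt}$), generates terms of the same type, schematically $\int_{T^1}\abs{\Phi'}\,\abs{(V_t,V_\theta,W_t,W_\theta)}^2\,d\theta$. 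H\"older gives at best $\|\Phi'\|_{L^2}\,\|(V_t,\dots)\|_{L^4}^2$, which is \emph{not} bounded by $\bigl(1+\|\Phi\|_{L^2}^2+\|\Phi'\|_{L^2}^2\bigr)\Dcal(t)$: at $H^1$ regularity there is no $L^\infty$ (or $L^4$) control on the first-order variables or on the first-order differences. So the Gronwall inequality you assert cannot be obtained by the direct computation you describe, and with it both uniqueness and \eqref{eq:H494} remain unproved. (The Sobolev embedding only controls $P,Q,V,W$ themselves, hence $\Omega^{\pm2}$ and $\Omega^{-2}-(\Omega')^{-2}$, but not the derivative factors that appear in the cubic terms.)

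This is exactly the difficulty the paper's Steps~3--5 are designed to overcome, and it requires an extra idea rather than a refinement of Gronwall. The paper first isolates the cubic error $M(\psi,\psi')$ and shows it has null structure, $\abs{M}\lesssim E(\psi-\psi')^{1/2}\bigl(N(\psi,\psi-\psi')^2+N(\psi',\psi-\psi')^2\bigr)^{1/2}$ as in \eqref{eq:Nulltodo}, so that what must be controlled is the \emph{space-time} integral of the null quantities $N^2$. It then introduces the weighted difference energy $A_\kappa(t)=\int_{T^1}E(\psi-\psi')\,e^{-\kappa\lambdab}\,d\theta$ of \eqref{eq-def-Akappa}: because $\lambdab_t,\lambdab_\theta$ are given by the constraints \eqref{33-weakconstraints}, the weight's time derivative produces precisely the dissipative term $-\tfrac{t\kappa}{2}\int\bigl(N(\psi,\psi-\psi')^2+N(\psi',\psi-\psi')^2\bigr)e^{-\kappa\lambdab}d\theta$, and for $\kappa$ large this absorbs the cubic error, yielding \eqref{eq:keykey} and then $A(t)\lesssim A(t_0)$. (A sideways characteristic energy estimate, Step~2 of the paper, supplies the time-integrated bounds used along the way.) Your proposal never engages with this mechanism --- the obstacle you do discuss (strong convergence of $\Omega^{-2}$ in the limit passage) concerns the compactness/existence part, not the stability estimate --- so as written the proof of the stability inequality, and hence of uniqueness, is missing its key ingredient. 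The remaining pieces (mollification, global smooth solutions, reconstruction of $\lambda$, and the $W^{1,1}$--$L^1$ bounds for $\lambda-\lambda'$ once the $L^2$ difference bound is known) are fine and agree with the paper.
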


The following corollary is an immediate consequence of \eqref{eq:H494}. The \textsl{weak stability} property
 below should be compared with the \textsl{weak instability} property that we pointed out in Theorem~\ref{theo-220}.

\begin{corollary}[Convergence property for weakly regular Gowdy spacetimes]
\label{cor-facile}
With the notation in Theorem~\ref{33-theoG}, any sequence $(P^n, Q^n, \lambda^n)$ of $H^1$-$W^{1,1}$ weak solutions 
which satisfies the uniform energy bound \eqref{33-enerbound} 
and converges strongly on a hypersurface of constant time $t_0 \in I=(0, + \infty)$, 
enjoys the following properties:  
\bei 

\item The whole sequence $(P^n, Q^n, \lambda^n)$ converges strongly in $H^1$-$W^{1,1}$ at all times $t \in I$ toward some $H^1$-$W^{1,1}$ limiting function $(P^\infty, Q^\infty, \lambda^\infty)$.

\item In particular, the energy density $\lambda^n_t$ and the energy flux density $\lambda_x^n$ converge strongly in $L_\loc^\infty(I, L^1(T^1))$ toward a unique limit $\lambda_t^\infty$ and $\lambda_x^\infty$. 

\item Consequently, $(P^\infty, Q^\infty, \lambda^\infty)$ is a $H^1$-$W^{1,1}$ Gowdy solution. 
\eei
\end{corollary}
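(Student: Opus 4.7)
The corollary is a direct consequence of the well-posedness theory stated in Theorem~\ref{33-theoG}, so my plan is essentially to chain together the existence and the nonlinear stability parts of that theorem with the hypothesis of strong convergence on the initial slice.

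First, I would introduce the limiting initial data. By the assumption of strong convergence at $t_0$, the sequence $(P^n(t_0), Q^n(t_0), \lambda^n(t_0), P^n_t(t_0), Q^n_t(t_0), \lambda^n_t(t_0))$ converges in the $H^1$-$W^{1,1}$ topology to some limit $(\bar P^\infty, \bar Q^\infty, \bar\lambda^\infty, \bar P^\infty_0, \bar Q^\infty_0, \bar\lambda^\infty_0)$. The uniform bound \eqref{33-enerbound} combined with the constraint equations \eqref{33-weakconstraints} shows that the limiting data still satisfy the constraints, so they form an admissible $H^1$-$W^{1,1}$ initial data set in the sense of Definition~\ref{def-weak-Gowdy-data}. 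The existence part of Theorem~\ref{33-theoG} then yields a unique $H^1$-$W^{1,1}$ weak solution $(P^\infty, Q^\infty, \lambda^\infty)$ defined for all $t \in I$ and attaining these initial data.

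Second, I would apply the nonlinear stability estimate \eqref{eq:H494} with the pair of solutions $(P^n, Q^n, \lambda^n)$ and $(P^\infty, Q^\infty, \lambda^\infty)$. This yields
\be
\dbf_{H^1 W^{1,1}}\big((P^n, Q^n, \lambda^n),(P^\infty, Q^\infty, \lambda^\infty)\big)(t)
\lesssim
\dbf_{H^1 W^{1,1}}\big((P^n, Q^n, \lambda^n),(P^\infty, Q^\infty, \lambda^\infty)\big)(t_0)
\ee
for every $t \in I$. The implied constant in Theorem~\ref{33-theoG} depends on the $H^1$-$W^{1,1}$ norms of the two solutions, which are controlled uniformly in $n$ on every compact subinterval of $I$ thanks to the uniform energy bound \eqref{33-enerbound} and the energy estimates \eqref{eq-238}-\eqref{eq-239}. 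Since the right-hand side tends to $0$ by strong convergence at $t_0$, so does the left-hand side, uniformly on compact time intervals. Unpacking the definition \eqref{eq:ourdist} of $\dbf_{H^1 W^{1,1}}$ gives precisely the claimed strong $H^1$-$W^{1,1}$ convergence at every time $t \in I$, which includes the strong $L^1$ convergence of $\lambda^n_t$ (the energy density) and $\lambda^n_\theta$ (the energy flux density) asserted in the second bullet. The third bullet is then automatic: by construction $(P^\infty, Q^\infty, \lambda^\infty)$ was already chosen as the Gowdy solution with the limiting initial data.

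Since strong limits of sequences satisfying the system pass to the limit in every nonlinear term (the convergence being strong rather than merely weak, in contrast with the setting of Theorem~\ref{theo-220}), the only genuinely nontrivial input is the nonlinear stability bound \eqref{eq:H494} itself, whose proof lies inside Theorem~\ref{33-theoG}. Thus, conditional on that theorem, the corollary requires essentially no new work — the main point to check, and arguably the only mild subtlety, is that the implied constant in \eqref{eq:H494} remains uniform in $n$, which follows because it depends only on the uniformly bounded $H^1$-$W^{1,1}$ norms of the solutions.
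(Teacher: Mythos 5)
Your argument is correct and coincides with the paper's intended reasoning: the corollary is stated there as an immediate consequence of the stability estimate \eqref{eq:H494}, applied exactly as you do, with the limit solution supplied by the existence part of Theorem~\ref{33-theoG} from the strongly convergent initial data and the uniformity of the implied constant coming from the uniform energy bound. Your additional remarks (constraints passing to the limit under strong convergence at $t_0$, convergence of the whole sequence rather than a subsequence) are accurate and fill in the details the paper leaves implicit.
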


\begin{remark} As we will show in Step 2 of the proof of Theorem~\ref{33-theoG}, solutions to the Gowdy equations also enjoy the following additional regularity (for any interval $[t_1, t_2] \subset I$): 
$$
\sup_{\theta_0 \in T^1} \big\| \big(P_t, P_\theta, Q_t, Q_\theta\big) (\cdot, \theta_0) \big\|_{L^2(t_1, t_2)} 
\lesssim \| (P_\theta, P_t, Q_\theta, Q_t ) \|_{L^2(T^1)}. 
$$
\end{remark}


\subsection{Proof of Theorem~\ref{33-theoG}}

{\bf Step 1.} We proceed with a density argument on the initial data and show the existence and uniqueness of the weak solutions by relying on the nonlinear stability statement \eqref{eq:H494} ---which
we will establish next for sufficiently regular solutions. 
\bei 

\item On one hand, from any sufficiently regular initial data set $(\barP, \barQ, \barP_0, \barQ_0)$ prescribed at a time $t=t_0 \in I$, elementary arguments for $1+1$ nonlinear wave equations yield the existence of a classical solution $P,Q$ (say of class $C^2$) defined in small interval of time (at least). 

\item On the other hand, any weakly regular initial data
$\barP, \barQ \in H^1(T^1)$ and $\barP_0, \barQ_0 \in L^2(T^1)$
can be approximated by smooth functions $\barP^n, \barQ^n, \barP_0^n, \barQ_0^n$ such that the following strong convergence property holds (at the initial time only):  
\bel{33-eq-93} 
\aligned
& \barP^n \to \barP, \qquad \barQ^n \to \barQ \quad && \text{ strongly in } H^1(T^1), 
\\
& \barP_0^n \to \barP_0, \qquad \barQ_0^n \to \barQ_0 \quad && \text{ strongly in } L^2(T^1). 
\endaligned
\ee
 \eei
The energy balance law provides us with a uniform control of the energy of the solutions $(P^n, Q^n)$, that is, 
\eqref{33-enerbound}  holds on every compact time interval $[t_1, t_2] \subset I$. 
From this sequence $(P^n, Q^n)$, we extract a subsequence that is weakly converging to some limit denoted by $(P,Q)$:
\bel{33-eq-91} 
\aligned
& P^n(t, \cdot) \to P(t, \cdot), \qquad && Q^n(t, \cdot) \to Q(t, \cdot) 
	\quad && \text{ weakly in } H^1(T^1), 
\\
& P_t^n \to P_t, \qquad && Q_t^n \to Q_t
	\quad && \text{ weakly in } L^2(I \times T^1). 
\endaligned
\ee
We claim that, in fact, 
\bel{33-eq-99} 
\aligned
& P^n(t, \cdot) \to P(t, \cdot), \qquad &&Q^n(t, \cdot) \to Q(t, \cdot) \quad && \text{ strongly in } H^1(T^1), 
\\
& P_t^n (t, \cdot) \to P_t(t, \cdot), \qquad &&Q_t^n(t, \cdot) \to Q_t (t, \cdot) \quad && \text{ strongly in } H^1(T^1), 
\endaligned
\ee
Indeed, this is immediate from the nonlinear stability estimate \eqref{eq:H494}, namely for all $n, n'$ 
$$
\aligned
& \sup_{t \in [t_1, t_2]} 
\| (P^n - P^{n'}, P_t^n - P_t^{n'}, P^n_\theta - P^{n'}_\theta) (t) \|_{L^2(T^1)} 
\\
& \lesssim  
\| (P^n - P^{n'}, P_t^n - P_t^{n'}, P^n_\theta - P^{n'}_\theta) (t_0) \|_{L^2(T^1)},
\endaligned
$$
which shows that the sequence is Cauchy in the $L^2$ norm.  This completes the argument of the existence of the weak solution; the uniqueness follows from the nonlinear stability inequality and the uniqueness of classical solutions. The classical solutions are also known to be defined for the whole range of time and this property carries over to weak solutions.

\vskip.3cm

\noindent{\bf Step 2. The formulation in quadratic variables.} 
For our proof of the nonlinear estimate \eqref{eq:H494}, it is convenient to rely on the quadratic formulation we proposed in Section~\ref{sec:Gowdy-quadra}. 
Given two Gowdy solutions $\Psi=(P_0, P_1, S_0, S_1, \lambda_0, \lambda_1)$ and $\Psi'=(P_0', P_1', S_0', S_1', \lambda_0, \lambda_1)$ expressed in quadratic variables, we focus on the ``essential variables'' $\psi, \psi'$ defined by 
\be
\Psi\eqqcolon(\psi, \lambda_0, \lambda_1), \qquad \Psi'\eqqcolon(\psi', \lambda_0, \lambda_1). 
\ee
It is not difficult to check that what we need to prove is the following inequality in the $L^2$ norm: 
\be 
\| (\psi - \psi') (t) \|_{L^2(T^1)} \lesssim \| (\psi - \psi') (t_0) \|_{L^2(T^1)}.  
\ee
Recall that we already have 
\be
\| \psi(t) \|_{L^2(T^1)} \lesssim \| \psi (t_0) \|_{L^2(T^1)},
\qquad
\| \psi'(t) \|_{L^2(T^1)} \lesssim \| \psi' (t_0) \|_{L^2(T^1)}.
\ee  
We will use later the following observation. 
By extending the solution $\psi$ by periodicity to all $\theta \in \RR$ 
and integrating the second equation of \eqref{eq-quadra2}, namely
\be
\lambda_{1t} - \lambda_{0\theta} = 0,
\ee
over the triangular domain 
\be
\Omega_{\theta_0} \coloneqq \big\{ (t, \theta) \, \bigm/ \, t_0 \leq t \leq  t_1; \,
 \theta_0 \leq \theta \leq  \theta_0 + t_1 - t \big\}
\ee
for any given $\theta_0$ with $\theta_1 \coloneqq \theta_0+(t_1-t_0)$, we obtain 
\bel{eq:timeintpsi0}
\int_{\theta_0}^{\theta_1} \lambda_1(t, \theta_0) \, d\theta 
+ \int_{t_0}^{t_1} (\lambda_0 + \lambda_1)(t, \theta_0 + t_1 - t) dt 
- \int_{t_0}^{t_1} \lambda_0(t, \theta_0) dt = 0. 
\ee 
The first two terms are controlled by the energy of $\psi$, by observing that 
$\abs{\lambda_1} \leq \lambda_0$ and 
by deriving a completely similar identity from the energy equation 
\be
\lambda_{0t} - \lambda_{1\theta} + {\lambda_0 \over t} =  - (P_0)^2 + (P_1)^2  - (S_0)^2 + (S_1)^2 
\ee
over the (larger) domain 
\be
\widetilde \Omega_{\theta_0} \coloneqq \big\{ (t, \theta) \, \bigm/ \, t_0 \leq t \leq  t_1; \,
 \theta_0 - t_1 + t \leq \theta \leq  \theta_0 + t_1 - t \big\}. 
\ee
Consequently, $\theta_0 \mapsto \int_{t_0}^{t_1} \lambda_0(t, \theta_0) \, d\theta$ is also bounded by the initial energy, that is, 
we have proven the $L^2$ time-integral estimates (for both solutions $\psi, \psi'$): 
\bel{eq:timeintpsi}
\sup_{\theta_0 \in T^1} \| \psi(\cdot, \theta_0) \|_{L^2(t_0, t_1))} \lesssim \| \psi (t_0) \|_{L^2(T^1)},
\qquad
\sup_{\theta_0 \in T^1} \| \psi'(\cdot, \theta_0) \|_{L^2(t_0, t_1))} \lesssim \| \psi'(t_0) \|_{L^2(T^1)}.
\ee

\vskip.3cm

\noindent{\bf Step 3. The energy functional for two solutions.} 
From the Gowdy equations, it is straightforward to derive the following energy identity involving the two solutions under consideration:  
\bel{eq-energydiffe}
E(\psi,\psi')_t + F(\psi, \psi')_\theta = -{2 \over t} G(\psi, \psi') + M(\psi, \psi'), 
\ee 
in which 
$$
E(\psi, \psi') \coloneqq E(\psi- \psi'), \qquad F(\psi, \psi') = F(\psi- \psi'), \qquad G(\psi, \psi') = G(\psi- \psi'),
$$ 
and, after a tedious calculation and setting $\Pb_0 \coloneqq (P_0+P_0')/2$, etc., 
\begin{subequations} 
\label{eq:4201} 
\bel{eq:4485}
\aligned
\frac{1}{2} M(\psi, \psi') 
& = (P_0 - P_0') \Big(\Sb_0 (S_0 - S_0') - \Sb_1 (S_1 - S_1')\Big)
+ (P_1 - P_1') \Big(\Sb_1 (S_0 - S_0') - \Sb_0 (S_1 - S_1')\Big)
\\
& \quad
- (S_0 - S_0') \Big(\Pb_0 (S_0 - S_0') - \Pb_1 (S_1 - S_1')\Big)
- (S_1 - S_1') \Big(\Pb_1 (S_0 - S_0') - \Pb_0 (S_1 - S_1')\Big)
.
\endaligned
\ee 
Observe that the cubic nonlinearity $M(\psi, \psi')$ obeys
\bel{eq:Nulltodo}
\aligned
| M(\psi,\psi') | & \lesssim \big( E(\psi- \psi') \big)^{1/2} \bigl( N(\psi, \psi- \psi')^2 + N(\psi', \psi- \psi')^2\bigr)^{1/2}, 
\endaligned
\ee 
\end{subequations}
in which
\bel{eq:nullformnotation-gene}
\aligned
N(\psi, \psi - \psi')^2
\coloneqq {}
&\bigl(P_0(P_0-P_0')-P_1(P_1-P_1')\bigr)^2
+\bigl(P_0(P_1-P_1')-P_1(P_0-P_0')\bigr)^2
\\
&
+\bigl(P_0(S_0-S_0')-P_1(S_1-S_1')\bigr)^2
+\bigl(P_0(S_1-S_1')-P_1(S_0-S_0')\bigr)^2
\\
&
+\bigl(S_0(P_0-P_0')-S_1(P_1-P_1')\bigr)^2
+\bigl(S_0(P_1-P_1')-S_1(P_0-P_0')\bigr)^2
\\
&
+\bigl(S_0(S_0-S_0')-S_1(S_1-S_1')\bigr)^2
+\bigl(S_0(S_1-S_1')-S_1(S_0-S_0')\bigr)^2
.
\endaligned
\ee
This generalizes the notation $N(\psi)^2=N(\Psi)^2=E(\psi)^2-F(\psi)^2$ for null terms defined in \eqref{eq:nullformnotation}, namely we have $N(\psi,\psi)=N(\psi)$.

We restrict attention to any given compact interval $[t_0, t_1] \subset I$
and we integrate \eqref{eq-energydiffe} over the torus $T^1$. The energy norm of  $\psi, \psi'$ is already controlled, 
and therefore with implied constants depending upon $t_0$ and $t_1$ we arrive at
\be
\aligned
\Biggl|{d \over dt} \int_{T^1} E(\psi- \psi') \, d\theta\Biggr|
& \lesssim
\Bigg( \int_{T^1} G(\psi- \psi') \, d\theta \Bigg)
\\
& \quad
+
\Bigg(  \int_{T^1} E(\psi - \psi') \, d\theta \Bigg)^{1/2} 
\Bigg( \int_{T^1} \big( N(\psi, \psi- \psi')^2 + N(\psi', \psi- \psi')^2 \big) \, d\theta \Bigg)^{1/2},
\endaligned
\ee
in which we have used \eqref{eq:4485}-\eqref{eq:Nulltodo} to bound the cubic terms. 
Finally, we set 
\bel{eq-def-ABC}
\aligned
A(t) &\coloneqq \int_{T^1} E(\psi- \psi')(t, \theta)  \, d\theta,   
\\
C(t) &\coloneqq \int_{T^1} \big( N(\psi, \psi- \psi')^2 + N(\psi', \psi- \psi')^2 \big) \, d\theta,
\endaligned
\ee
so that on any compact interval $[t_0, t_1]$ we obtain the inequality
\bel{eq:key-esti} 
\biggl|{d \over dt} A(t)\biggr|
\lesssim  A(t) + C(t), \qquad 
t \in [t_0, t_1]. 
\ee
The implied constant depends upon $t_0, t_1$, and the energy of the solutions. 
We are going to derive variants of the inequality \eqref{eq:key-esti} that incorporate a weight depending upon $\psib=(\psi+\psi')/2$, and this will control the space-time integral of the null expressions $N(\psi, \psi - \psi')$ and $N(\psi', \psi - \psi')$.
At the end of our argument, we will have established that the null terms are controlled by the initial data 
\be
\int_{t_0}^{t_1} C(t) dt \lesssim A(t_0), 
\ee
which in combination with \eqref{eq:key-esti} implies (for some sufficiently large constant $c_* > 0$) 
\be
\aligned
e^{c_* t} A(t)
& \lesssim  e^{c_* t_0} A(t_0) + \int_{t_0}^t C(s) ds
 \lesssim A(t_0),  \qquad 
t \in [t_0, t_1]. 
\endaligned
\ee

\vskip.3cm

\noindent{\bf Step 4. The functional for two solutions.} 
Given $\kappa>0$,
we define a  functional associated with two solutions by
\be
\Vcalt_\kappa(t) \coloneqq \int_{T^1} e^{-\kappa \lambdab} \, d\theta, 
\ee
where $\lambdab \coloneqq (\lambda+\lambda')/2$.
This generalizes the functional $\Vcalt(t)$ of Section~\ref{sec:Gowdy-Volu} that is associated to a single solution.
We also introduce weighted energy and null forms in analogy to~\eqref{eq-def-ABC}:
\bel{eq-def-Akappa}
\aligned
A_\kappa(t) &\coloneqq \int_{T^1} E(\psi- \psi')(t, \theta)  \, e^{-\kappa\lambdab} d\theta,
\\
C_\kappa(t) &\coloneqq \int_{T^1} \big( N(\psi, \psi- \psi')^2 + N(\psi', \psi- \psi')^2 \big)(t, \theta) \, e^{-\kappa\lambdab} d\theta.
\endaligned
\ee
Finally, from \eqref{eq-energydiffe} we obtain 
\bel{eq-energydiffe-lambda}
\aligned
& \Big( E(\psi- \psi')  e^{-\kappa \lambdab} \Big)_t + \Big( F(\psi- \psi')  e^{-\kappa \lambdab} \Big)_\theta 
+{2 \over t} G(\psi - \psi')  e^{-\kappa \lambdab} 
\\
& = M(\psi, \psi')  e^{-\kappa \lambdab} 
- \kappa  \Big( E(\psi- \psi')  \lambdab_t + F(\psi- \psi') \lambdab_\theta \Big) e^{-\kappa \lambdab}
\\
& = \biggl( M(\psi, \psi') 
- \frac{t \kappa}{2}\big( N(\psi, \psi - \psi')^2 + N(\psi', \psi - \psi')^2 \big)
\biggr) \,  e^{-\kappa \lambdab},
\endaligned
\ee 
where we used the observation that
\be
E(\psi- \psi')  \lambda_t + F(\psi- \psi') \lambda_\theta
= t \, N(\psi, \psi - \psi')^2.
\ee

Integrating over $T^1$, then using the inequalities \eqref{eq:4201} satisfied by $M$, gives
$$
\aligned
& \Biggl| \frac{d}{dt} A_\kappa(t) + \frac{t\kappa}{2} C_\kappa(t)
+ {2 \over t}  \int_{T^1} G(\psi - \psi') \, e^{-\kappa \lambdab} d\theta \Biggr|
\\
& 
\leq \int_{T^1} \bigl|M(\psi, \psi')\bigr| \, e^{-\kappa \lambdab} d\theta
\\
& \lesssim
\int_{T^1}
\big( E(\psi- \psi') \big)^{1/2} \big( N(\psi, \psi- \psi')^2 + N(\psi', \psi- \psi')^2 \big)^{1/2}
 \, e^{-\kappa \lambdab} d\theta.
\endaligned
$$
Bounding $0\leq G\leq E$ we conclude
\bel{eq:keykey}
\Big| \frac{d}{dt} A_\kappa(t) + \kappa C_\kappa(t) \Big|
\lesssim
A_\kappa(t) + C_\kappa(t).
\ee
This correctly reduces to \eqref{eq:key-esti} for $\kappa=0$.

\vskip.3cm

\noindent{\bf Step 5. Integration of \eqref{eq:keykey}.} We can now integrate \eqref{eq:keykey} over an interval $[t_0, t]$ and, with a sufficiently large constant $c_*> 0$ (independent of $\kappa$), obtain 
$$
\aligned
e^{c_*t} A_\kappa (t) + \int_{t_0}^t e^{c_*s}  \kappa C_\kappa(s) ds
& \lesssim e^{c_*t_0} A_\kappa (t_0)
+ 
\int_{t_0}^t e^{c_*s} C_\kappa(s)\, ds, 
\endaligned
$$
which on a compact interval of time is equivalent to saying 
\bel{eq:56} 
\aligned
A_\kappa (t) + \kappa \int_{t_0}^t  C_\kappa(s) ds
& \lesssim A_\kappa (t_0) + \int_{t_0}^t C_\kappa(s) ds.
\endaligned
\ee
One consequence is that for large enough $\kappa$
$$
\int_{t_0}^t C_\kappa(s) ds \lesssim A_\kappa (t_0)
$$
Using this back into \eqref{eq:56} we obtain
$$
A_\kappa(t) \lesssim A_\kappa(t_0).
$$
Now, since $\lambdab$ is bounded, $A_\kappa$ and $A$ are equivalent up to $\kappa$-dependent constants.
We conclude that
$$
A_\kappa(t) \lesssim A_\kappa(t_0) , \qquad t\in [t_0,t_1]
$$
with $\kappa$-dependent constants.


\section{Weak convergence and solutions with bounded variation} 
\label{sec:6}
   
\subsection{Gowdy solutions with \texorpdfstring{$H^1$-$BV$}{H\textonesuperior-BV} regularity} 
\label{sec:GowdyH1-BV}

In the present section,  we propose several additional notions of solutions. 
For simplicity in the presentation we focus on the Gowdy equations and postpone to \cite{LeFlochLeFloch-2} the generalization to the Euler-Gowdy system. Our standpoint is as follows: in a first stage, we formulate a broad notion of
solution to the evolution part of the Gowdy equations and we establish an existence and stability result for the initial value problem; in a second stage of our analysis, we can take 
Gowdy's constraint equations into account and determine the regularity or integrability required on the initial data set. 
While this strategy by itself does not lead to a new existence result, it has the definite advantage that it separates the existence and the regularity issues.

Consider the equations \eqref{33-eq2-FULL}-\eqref{33-eq2-FULL-2} with unknowns $P, Q, \lambda$. 
As we are going to show, the concept of solutions presented now arises when considering limits of a sequence of $H^1$-$W^{1,1}$ weak solutions. We use the notation $BV(T^1)$ for the space of functions with bounded variation on $T^1$, that is, functions whose distributional derivative is a bounded measure, and we denote by $M(T^1)$ the space of bounded measures on $T^1$. 
In the following definition, the initial data set and the solutions need not satisfy the constraints~\eqref{33-weakconstraints}, and we revisit the treatment of the equation satisfied by $\lambda$. For convenience, we work here with the \textsl{new unknown}
\be
\mu \coloneqq \lambda + P, 
\ee
which in view of \eqref{33-eq2-FULL-2} satisfies  
\bel{eq:ourmuequa}
\mu_{tt} - \mu_{\theta\theta} + {1 \over t} \, P_t = - P_t^2 + P_\theta^2.
\ee
In the following definition the left-hand side of this equation will involve the first-order derivatives of the measures $\mu_t$ and $\mu_\theta$. 

So, we now study the system 
\bel{33-eq2-FULL-Z}
\aligned 
P_{tt} - P_{\theta\theta} + {1 \over t} \, P_t  
& =  e^{2P} (Q_t^2 - Q_\theta^2), 
\\ 
Q_{tt} - Q_{\theta\theta} + {1 \over t} \, Q_t  
& = - 2 (P_t Q_t - P_\theta Q_\theta)
\\
\mu_{tt} - \mu_{\theta\theta} + {1 \over t} \, P_t 
& = - P_t^2 + P_\theta^2.
\endaligned
\ee

\begin{definition} 
\label{def:H1BV}
Consider Gowdy's evolution equations \eqref{33-eq2-FULL-Z} for the unknown coefficients $P, Q, \mu$. 
\bei 

\item A set of functions $(\barP, \barQ, \barP_0, \barQ_0, \barmu, \barmu_0)$ defined on $T^1$ 
is called a \emph{$H^1$-$BV$ initial data set for Gowdy's evolution equations} if   
\bel{33-intialregu-BV} 
\barP, \barQ \in H^1(T^1), \qquad
\barP_0, \barQ_0 \in L^2(T^1),
 \qquad  \barmu \in BV(T^1), 
\qquad
\barmu_0 \in M(T^1).
\ee  

\item  
A triple of functions $(P,Q, \mu)$ defined on $I \times T^1$ ($I \subset (0, + \infty)$ being an interval) 
and satisfying 
\bel{eq:regu53}
\aligned
&  P,Q \in L_\loc^\infty(I, H^1(T^1)), 
\qquad 
P_t, Q_t \in L_\loc^\infty(I, L^2(T^1)), 
\\
& \mu \in L_\loc^\infty(I, BV(T^1)), 
\qquad 
\mu_t \in L_\loc^\infty(I, M(T^1)), 
\endaligned
\ee
 is called a $H^1$-$BV$ \emph{weak solution to Gowdy's evolution equations} if the evolution equations \eqref{33-eq2-FULL-Z} hold in the sense of distributions. 
\eei
\end{definition}

Initial data are required to hold in the distributional sense only, but from the equations themselves we can deduce that they are achieved in a much stronger sense. (We omit the details.)
The arguments of proof we presented earlier can be generalized to solve the initial value problem at the $H^1$-$BV$ level of regularity. Namely, the treatment of the variables $P, Q$ is the same as before since both functions are in $H^1$. 
On the other hand, it is sufficient to observe that the equation \eqref{eq:ourmuequa} satisfied by $\mu$ has a \textsl{linear principal part} 
while  the right-hand side belongs to $L_\loc^1(I \times T^1)$ and is known once $P,Q,$ have been determined from the first two equations in \eqref{33-eq2-FULL-Z}. By setting $x_\pm \coloneqq (\theta \pm t)/2$ we can rewrite  \eqref{eq:ourmuequa} in the form  
\bel{eq:ourmuequa-2}
\mu_{x_- x_+}  = {P_t \over t} + P_t^2 - P_\theta^2 \in L_\loc^1(I \times T^1),
\ee
which can then be explicitly integrated along characteristics. It thus follows that, given initial data $\barmu, \barmu_0$ at some $t_0 \in I$, the equation \eqref{eq:ourmuequa-2} admits a solution $\mu$ satisfying the regularity \eqref{eq:regu53} and the prescribed initial condition. Of course, by our integrability property, the right-hand side of \eqref{eq:ourmuequa} satisfies $P_t^2 - P_\theta^2 \in L_\loc^2(I \times T^1)$, but we do not need this property in the present argument. 


Our main motivation for introducing the above definition comes from considering limits of Gowdy solutions. In the following statement, we consider a sequence of weak solutions but, obviously, our result also applies to a sequence of smooth solutions. While we already remarked that the Einstein equations \textsl{are not stable under weak convergence}, we can now determine the equations satisfied at the limit. 
The essential Einstein equations are indeed stable, but the constraint equations are \textsl{not stable} 
and a new contribution arises which is determined by solving a wave equation.  

\begin{theorem}[Sequences of weakly converging sequences of Gowdy solutions]  
\label{theo-4400}
Consider a sequence $(\Pn, \Qn, \mu^n)$ of $H^1$-$W^{1,1}$ weak solutions to the full set of Gowdy equations 
satisfying  the energy bound \eqref{33-enerbound}. 
Then, for a subsequence at least, 
$(\Pn, \Qn, \mu^n)$  converges to a limit 
$(P^\infty, Q^\infty, \mu^\infty)$ which is a $H^1$-$BV$ solution
\bel{33-eq2-FULL-limit}
\aligned 
P^\infty_{tt} - P^\infty_{\theta\theta} + {1 \over t} \, P^\infty_t  
& =  e^{2P^\infty}  \big( (Q^\infty_t)^2 - (Q^\infty_\theta)^2 \big), 
\\ 
Q^\infty_{tt} - Q^\infty_{\theta\theta} + {1 \over t} \, Q^\infty_t  
& = - 2 \big( P^\infty_t Q^\infty_t - P^\infty_\theta Q^\infty_\theta \big),
\\
\mu^\infty_{tt} - \mu^\infty_{\theta\theta} + {1 \over t} \, P^\infty_t
& = - (P^\infty_t)^2 + (P^\infty_\theta)^2. 
\endaligned
\ee 
Moreover, Gowdy's constraint equations are no longer satisfied in the limit $n \rightarrow +\infty$ 
and, instead, one has
\bel{33-weakconstraints-limit}
\aligned
{1 \over t} \mu^\infty_t 
& = {1 \over t} P^\infty_t 
+ (P^\infty_t)^2 + (P^\infty_\theta)^2  
            + e^{2P^\infty} \big( (Q^\infty_t)^2 + (Q^\infty_\theta)^2 \big)  + {1 \over t} \phi_t, 
\\
{1 \over t} \mu^\infty_\theta 
& = {1 \over t} P^\infty_\theta 
	+ 2 P^\infty_t P^\infty_\theta + 2 e^{2P^\infty} Q^\infty_t Q^\infty_\theta + {1 \over t} \phi_\theta. 
\endaligned
\ee  
in which the scalar field  
$\phi \in L_\loc^\infty(I, BV(T^1))$
is a solution to the wave equation 
\bel{eq:wavephi}
\phi_{tt} - \phi_{\theta\theta} = 0.
\ee
\end{theorem}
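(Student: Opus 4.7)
The plan is to extract weak limits, pass to the limit in the nonlinear terms (including the null forms on the right-hand sides) using the conservative structure of the essential Gowdy equations, and then identify the measure-valued defect $\phi$ in the constraints and verify its wave equation by direct calculation.

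\emph{Compactness.} From \eqref{33-enerbound}, the one-dimensional Sobolev embedding $H^1(T^1) \hookrightarrow L^\infty(T^1)$, the compact embedding $H^1(T^1) \hookrightarrow\hookrightarrow L^p(T^1)$ for every finite $p$, and the uniform $L^\infty_{\loc}(I,L^2)$-bound on the time derivatives, a standard Aubin--Lions-type argument produces a subsequence along which $(P^n, Q^n) \rightharpoonup (P^\infty, Q^\infty)$ weakly in $H^1_{\loc}(I \times T^1)$ and strongly in $L^p_{\loc}$ for every finite $p$, while $(P^n_t, Q^n_t, P^n_\theta, Q^n_\theta)$ converge weakly in $L^2_{\loc}$. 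The $W^{1,1}$-bound on $\lambda^n$ transfers to $\mu^n = \lambda^n + P^n$ and gives weak-$*$ convergence $\mu^n \rightharpoonup \mu^\infty \in L^\infty_{\loc}(I, BV(T^1))$ with $\mu^n_t \rightharpoonup \mu^\infty_t \in L^\infty_{\loc}(I, M(T^1))$. The essential Gowdy equations for $(P^\infty, Q^\infty)$ follow at once from Theorem~\ref{theo-220}(1) via the first-order formulation, in which the coefficient $\Omega^{-2}=e^{2P^n}$ is strongly compact.

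\emph{Passage to the limit in the null forms.} The decisive step is to show that the null forms in the right-hand sides of the equations for $P$, $Q$ and $\mu$ pass to the weak limit. From the conservative identity $\partial_t(t e^{2P} Q_t) = \partial_\theta(t e^{2P} Q_\theta)$ one derives
\[
e^{2P}\big(Q_t^2 - Q_\theta^2\big) = \partial_t\big(e^{2P} Q Q_t\big) - \partial_\theta\big(e^{2P} Q Q_\theta\big) + \tfrac{1}{t}\, e^{2P} Q Q_t,
\]
whose right-hand side involves only products of the strongly-convergent coefficients $e^{2P^n}Q^n$ (uniformly bounded in $L^\infty$ by the $1$D Sobolev embedding) with weakly-convergent first derivatives, so the $Q$-null form converges in the distributional sense to $e^{2P^\infty}((Q^\infty_t)^2 - (Q^\infty_\theta)^2)$. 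Using this together with the companion identity
\[
P_t^2 - P_\theta^2 = \partial_t(P P_t) - \partial_\theta(P P_\theta) + \tfrac{P P_t}{t} - P\,e^{2P}\big(Q_t^2 - Q_\theta^2\big),
\]
which follows from the essential Gowdy wave equation $\Box P + P_t/t = e^{2P}(Q_t^2 - Q_\theta^2)$, the $P$-null form likewise converges to $(P^\infty_t)^2 - (P^\infty_\theta)^2$. The third equation of \eqref{33-eq2-FULL-limit} for $\mu^\infty$ is then obtained by passing to the limit in $\mu^n_{tt} - \mu^n_{\theta\theta} + P^n_t/t = -(P^n_t)^2 + (P^n_\theta)^2$.

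\emph{The defect $\phi$ and its wave equation.} Denote by $\alpha^\infty, \beta^\infty$ the weak-$*$ limits of the right-hand sides of the constraints \eqref{33-weakconstraints} evaluated on the sequence, and by $A^\infty, B^\infty$ the same expressions evaluated on $(P^\infty, Q^\infty)$. Passing to the limit in the constraints yields the modified constraints \eqref{33-weakconstraints-limit} with $\phi_t \coloneqq t(\alpha^\infty - A^\infty)$ and $\phi_\theta \coloneqq t(\beta^\infty - B^\infty)$ being bounded measures. The compatibility $(\phi_t)_\theta = (\phi_\theta)_t$ is inherited from passing to the limit in $(\mu^n_t)_\theta = (\mu^n_\theta)_t$, so $\phi \in L^\infty_{\loc}(I, BV(T^1))$ is well-defined up to an additive constant. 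Finally, differentiating the modified constraints and substituting the essential Gowdy equations for $(P^\infty, Q^\infty)$ together with the corresponding energy balance $A^\infty_t - B^\infty_\theta + 2 G^\infty/t = 0$ yields, after systematic cancellations,
\[
\mu^\infty_{tt} - \mu^\infty_{\theta\theta} + \tfrac{P^\infty_t}{t} = -(P^\infty_t)^2 + (P^\infty_\theta)^2 + \phi_{tt} - \phi_{\theta\theta},
\]
which, compared with the wave equation for $\mu^\infty$ derived above, forces $\phi_{tt} - \phi_{\theta\theta} = 0$. The main obstacle is the second step: establishing the compensated-compactness-type convergence of the null forms; once that is secured, the identification of $\phi$ and its wave equation reduces to elementary algebraic manipulations.
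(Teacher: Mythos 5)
Your compactness step and your handling of the null forms are essentially the paper's own argument: the paper multiplies the wave equations by $P^n,P^\infty$ (resp.\ $Q^n,Q^\infty$) and uses strong-times-weak stability, which is the same compensated-compactness mechanism as your identities derived from the conservative form \eqref{eq:33-8}; your final algebra identifying $\phi$ and forcing $\phi_{tt}-\phi_{\theta\theta}=0$ also matches the paper's Step 3.

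The genuine gap is in between. Both the existence of the potential $\phi$ and your concluding cancellation rest on the energy identities for the \emph{limit} pair $(P^\infty,Q^\infty)$, namely \eqref{eq:us} and its dual \eqref{eq:us-du}, and you never establish them. Your claim that $(\phi_t)_\theta=(\phi_\theta)_t$ is ``inherited from passing to the limit in $(\mu^n_t)_\theta=(\mu^n_\theta)_t$'' is not correct as stated: that limit only yields the compatibility of $\mu^\infty$, i.e.\ $\partial_\theta(t\alpha^\infty)=\partial_t(t\beta^\infty)$; after subtracting the regular part, what remains to be proved is precisely $\partial_t\bigl(tB^\infty\bigr)=t\,\partial_\theta A^\infty$, which is the dual energy identity \eqref{eq:us-du} for the weak limit. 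Likewise, the balance $A^\infty_t-B^\infty_\theta+2G^\infty/t=0$ that you substitute at the end is \eqref{eq:us} for the limit, invoked without proof. These identities are not automatic for an $H^1$ weak solution of the essential equations: they amount to multiplying the wave equations by $P_t,Q_t$, a chain-rule operation at a regularity where the relevant products are a priori only in $L^1$. The paper's Step 2 closes exactly this point: by the space-time integrability property (Theorem~\ref{theo:higherinte}), the null-form right-hand sides of the limit equations \eqref{33-eL-Z} lie in $L^2_\loc(I\times T^1)$, which provides enough regularity for the chain rule and hence for \eqref{eq:us}--\eqref{eq:us-du}. In your argument one must additionally explain why this integrability transfers to the limit, e.g.\ by uniform $L^2_\loc$ bounds on the null forms along the sequence obtained from the weighted functional of Section~\ref{sec:Gowdy-Volu} (the weights $e^{\lambda^n/4}$ being uniformly controlled since $\lambda^n$ is bounded in $W^{1,1}\subset L^\infty$), combined with the distributional convergence of the null forms that you already proved. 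Without this ingredient the construction of $\phi$ and the derivation of its wave equation do not go through.
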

  
\begin{definition} In the context of the above theorem, the scalar field $\phi$ arising in Theorem~ \ref{theo-4400}
will be refered to as the \emph{energy defect field} generated by the given sequence of Gowdy solutions.
\end{definition}

The scalar field $\phi$ represents the ``matter generated'' from a weakly converging sequence of \textsl{vacuum solutions.} 
Theorem~\ref{theo-4400} is consistent with our earlier nonlinear stability theorem which treated a sequence of initial data sets converging in the strong sense: in this case,  the spurious matter terms $\phi$ and $\phi_t$ vanish at the initial time and our wave equation \eqref{eq:wavephi} implies that $\phi$ vanishes identically at all time. 

\begin{proof}
\noindent{\bf Step 1.}  
The Sobolev compactness embedding theorem shows that the sequence $(\Pn, \Qn) \in H^1$  subconverge in the 
$H^1$ norm, while Helly's theorem implies that the sequence $\mu^n \in W^{1,1}$ sub-converges almost everywhere and weakly-star in measure toward some limit $\mu^\infty$.  
We prove next that null forms are stable under weak convergence, which immediately tells us 
that the equations \eqref{33-eq2-FULL-limit} hold in the limit. 

First, we observe that 
\bel{eq:3756}
P^n_{tt} - P^n_{\theta\theta} 
= A^n \coloneqq -  {1 \over t} \, P^n_t +  e^{2P^n} ( (Q^n_t)^2 - (Q^n_\theta)^2), 
\ee
in which $P^n$ is uniformly bounded in the $H^1$ norm while $A^n$ is bounded in $L^1$. So, extracting a subsequence if necessary we can assume that $P^n \rightharpoonup P^\infty$ weakly in $H^1$ while $A^n \to A^\infty$ in the sense of bounded measures. Obviously, we have 
$$
P^\infty_{tt} - P^\infty_{\theta\theta} = A^\infty,
$$
which multiplied by $P^\infty$ gives
\bel{eq:hd49} 
\big(P^\infty P^\infty_t \big)_t - \big(P^\infty P^\infty_\theta \big)_\theta 
= P^\infty A^\infty + (P^\infty_t)^2 - (P^\infty_\theta)^2. 
\ee
On the other hand, we can also multiply \eqref{eq:3756} by $P^n$ and obtain 
$$
\big( P^n P^n_t\big)_t - \big(P^n P^n_\theta \big)_\theta 
= P^n A^n + (P^n_t)^2 - (P^n_\theta)^2. 
$$
Recalling that products of strongly convergent terms by weakly convergent terms are stable, we obtain 
\bel{eq:hd50} 
\big( P^\infty P^\infty_t\big)_t - \big(P^\infty P^\infty_\theta \big)_\theta 
= P^\infty A^\infty + \lim_{n \rightarrow + \infty} \big( (P^n_t)^2 - (P^n_\theta)^2 \big). 
\ee
By comparing \eqref{eq:hd49} and \eqref{eq:hd50}, we conclude that the null form 
$$
\lim_{n \rightarrow + \infty} \big( (P^n_t)^2 - (P^n_\theta)^2 \big) = (P^\infty_t)^2 - (P^\infty_\theta)^2
$$
converges in the sense of distributions. Similarly, multiplying instead by $Q^\infty$ and by $Q^n$ gives a proof that $P_t Q_t - P_\theta Q_\theta$ is stable under weak convergence.
Repeating the argument for
$$
Q^\infty_{tt} - Q^\infty_{\theta\theta} 
=
- {1 \over t} \, Q^\infty_t  - 2 (P^\infty_t Q^\infty_t - P^\infty_\theta Q^\infty_\theta)
$$
shows that $(Q_t)^2 - (Q_\theta)^2$ is also stable under weak convergence.

\vskip.3cm

\noindent{\bf Step 2.} Next, we establish the energy equations for $P^\infty, Q^\infty$,  i.e.
\bel{eq:us}
\aligned 
\Bigg(
(P^\infty_t)^2+ (P^\infty_\theta)^2 + e^{2P^\infty} \big( (Q^\infty_t)^2 + (Q^\infty_\theta)^2 \big) 
\Bigg)_t 
- \big( 2 P^\infty_t P^\infty_\theta  + 2 e^{2P^\infty} Q^\infty_t Q^\infty_\theta \big)_\theta 
& =  - {2 \over  t} \,  \Big( (P^\infty_t)^2  + e^{2P^\infty} (Q^\infty_t)^2 \Big)
\endaligned
\ee
and 
\bel{eq:us-du}
\aligned 
\Bigg( 2 t P^\infty_t P^\infty_\theta  + 2 t e^{2P^\infty} Q^\infty_t Q^\infty_\theta 
\Bigg)_t 
- 
t \Big( 
(P^\infty_t)^2+ (P^\infty_\theta)^2 + e^{2P^\infty} \big( (Q^\infty_t)^2 + (Q^\infty_\theta)^2 \big) 
\Big)_\theta 
& = 0. 
\endaligned
\ee
We observe  
that the right-hand side of the equations satisfied by $(P^\infty, Q^\infty)$ belong to $L_\loc^2$ (in space and time) thanks to our integrability property.
We have 
\bel{33-eL-Z}
\aligned 
P^\infty_{tt} - P^\infty_{\theta\theta} + {1 \over t} \, P^\infty_t  
& =  e^{2P^\infty} \big((Q^\infty_t)^2 - (Q_\theta^\infty)^2\big) \in L^2_\loc(I \times T^1), 
\\ 
Q^\infty_{tt} - Q^\infty_{\theta\theta} + {1 \over t} \, Q^\infty_t  
& = - 2 (P^\infty_t Q^\infty_t - P^\infty_\theta Q^\infty_\theta) \in L^2_\loc(I \times T^1), 
\endaligned
\ee
and we have sufficient regularity for the chain rule to apply. 

\vskip.3cm

\noindent{\bf Step 3.}
In order to analyze the constraint equation, we introduce the two measures 
\be 
\aligned
{1 \over t} \xi_0 
& \coloneqq
- \Big((P_t^\infty)^2 + (P_\theta^\infty)^2 + e^{2P^\infty} \big( (Q_t^\infty)^2 + (Q_\theta^\infty)^2 \big) \Big)
\\
& \quad + 
\lim_{n \rightarrow + \infty} \Big(
(P_t^n)^2 + (P_\theta^n)^2 + e^{2P^\infty} \big( (Q_t^n)^2 + (Q_\theta^n)^2 \big) \Big),
\\
{1 \over t} \xi_1 
& \coloneqq
- 2 \, \Big(P_t^\infty P_\theta^\infty + e^{2P^\infty} Q^\infty_t Q^\infty_\theta \Big)
+ 2 \lim_{n \rightarrow + \infty} \Big( P_t^n P_\theta^n + e^{2P^n} Q^n_t Q^n_\theta \Big). 
\endaligned
\ee
In both right-hand sides, the first term is an integrable function while the second term is solely a measure (obtained as the limit of $L^1$ functions). With this notation, from the constraint equations satisfied by  $(\Pn, \Qn, \mu^n)$ 
we thus have immediately  
$$
\aligned
{1 \over t} \mu^\infty_t 
& = {1 \over t} P^\infty_t 
	+ (P^\infty_t)^2 + (P^\infty_\theta)^2  
            + e^{2P^\infty} \big( (Q^\infty_t)^2 + (Q^\infty_\theta)^2 \big)  + {1 \over t} \xi_0, 
\\
{1 \over t} \mu^\infty_\theta 
& = {1 \over t} P^\infty_\theta 
	+ 2 \big( P^\infty_t P^\infty_\theta + e^{2P^\infty} Q^\infty_t Q^\infty_\theta \big) + {1 \over t} \xi_1. 
\endaligned
$$

Combining the dual energy equation \eqref{eq:us-du} with the compatibility relation $(\mu^\infty_\theta)_t - (\mu^\infty_t)_\theta = 0$ gives
$$
(\xi_1)_t  - (\xi_0)_\theta= 0, 
$$
and therefore there exists a potential, denoted by $\phi$, such that 
\bel{eq:phitphitheta}
\xi_0 = \phi_t, \qquad \xi_1 = \phi_\theta. 
\ee
This function has bounded variation in space and Einstein's contraint equations hold in the generalized form \eqref{33-weakconstraints-limit}. 
In turn, the energy equation \eqref{eq:us} translates to
$$
\big( \mu^\infty_t - \xi_0 \big)_t
 - \big( \mu^\infty_{\theta} - \xi_1 \big)_\theta 
+ {1 \over t} \, P^\infty_t
 = - (P^\infty_t)^2 + (P^\infty_\theta)^2.
$$
Hence, by comparing with the third equation in \eqref{33-eq2-FULL-limit} and using \eqref{eq:phitphitheta}, we find the wave equation \eqref{eq:wavephi} for $\phi$. 
\end{proof}


\subsection{Gowdy solutions with bounded variation}
\label{sec:GowdyBV} 

This section provides an even weaker notion of solution to the essential Gowdy equations in $P,Q$.
Now, we assume that both functions may be discontinuous and we develop a novel theory of weak solutions in the class of functions with bounded variation ---the weakest class of regularity in which these equations make sense.
We recall that bounded variation solutions to the Einstein equations were first constructed in spherical symmetry \cite{Christo-1995} and plane symmetry~\cite{Barnes-2004}. 

We thus consider the system \eqref{33-eq2-FULL-Z}, which we put in our conservative form \eqref{eq:33-8}, namely
\bel{eq:33-8-deux-more} 
\aligned
\Big( t \big( P_t - e^{2P} Q Q_t \big) \Big)_t - \Big( t \big( P_\theta - e^{2P} Q Q_\theta \big) \Big) _\theta 
& = 0, 
\\
\big( t e^{2P}  Q_t \big)_t - \big( t e^{2P} Q_\theta \big)_\theta  
& = 0. 
\endaligned
\ee 
We use Volpert's product \cite{Volpert-1967} (see \eqref{eq:V2} below) in order to give a meaning to the products of a BV function by a measure.

\begin{definition} 
\label{def:BVBV}
Consider the essential Gowdy equations with unknowns $P, Q$:
\bei 

\item A set of functions $(\barP, \barQ, \barP_0, \barQ_0)$
defined on $T^1$ 
is called a \emph{$BV$ initial data set} for the essential Gowdy equations if   
\bel{33-intialregu-BVBV} 
\aligned
& \barP, \barQ
\in BV(T^1), 
\qquad
\barP_0, \barQ_0
\in M(T^1). 
\endaligned
\ee  

\item A pair of functions $(P,Q)$
defined on $I \times T^1$ ($I$ being an interval) satisfying
\bel{eq:6255}
\aligned 
& P,Q
\in L_\loc^\infty(I, BV(T^1)), 
\qquad 
P_t, Q_t
\in L_\loc^\infty(I, M(T^1)), 
\endaligned
\ee
 is called a \emph{BV solution to the essential Gowdy equations} if \eqref{eq:33-8-deux-more} hold
 in the sense of distributions, in which the products 
$e^{2P} Q Q_t$, 
$e^{2P} Q Q_\theta$, 
$e^{2P} Q_t$, and $e^{2P} Q_\theta$ are in the sense of Volpert's product (cf.~Definition~\ref{def-666} below). 

\eei
\end{definition}

As before, a solution assumes the prescribed initial data set at some time $t_0\in I$ if in the sense of distributions 
\be
\aligned
&
P(t_0,\cdot) = \barP, \quad P_t(t_0,\cdot) = \barP_0, 
\quad
Q(t_0,\cdot) = \barQ, \quad Q_t(t_0,\cdot) = \barQ_0,
\endaligned
\ee

\begin{theorem}[Theory of BV solutions to the essential Einstein evolution equations in Gowdy symmetry] 
\label{theo-stableGowdy-44}
Consider the class of Gowdy solutions:   
\bei 

\item[\bf 1.] \emph{Existence of BV solutions.}  
Given any $BV$ initial data set $(\barP_0, \barP, \barQ_0, \barQ)$
prescribed at some $t_0> 0$, there exists a $BV$ solution $(P,Q)$
to the essential Gowdy equations which is defined on an interval $(\tmin, \tmax)$  containing $t_0$, 
until the solution $P$ possibly blows up to $-\infty$ at critical times $\tmin \in [0, t_0)$ or/and $\tmax \in (t_0, +\infty]$. 

\item[\bf 2.] \emph{Stability of solutions.} 
If $(P, Q)$ and $(P',Q')$ are two $BV$ solutions  to the essential Gowdy equations 
 defined for all $t \in I$ and $\theta \in T^1$, then\footnote{Due to the very weak regularity under consideration, only the $L^1$ norm can be controlled, rather than the BV norm (which is not continuous in presence of jump discontinuities).} 
\bel{eq:H494-BV}
\dbf_{BV \text{reg}} \big[P, Q; P', Q'\big](t) 
\lesssim \dbf_{BV \text{reg}} \big[P, Q; P', Q'\big](t_0)
\ee
where 
\be
\aligned 
&\dbf_{BV \text{reg}} \big[P, Q; P', Q'\big](t) 
\\
& \coloneqq 
\| (P - P') (t)\|_{L^1(T^1)} + \| (Q - Q') (t) \|_{L^1(T^1)} 
\\
& \quad +
\| (\Psharp - {\Psharp}') (t)\|_{L^1(T^1)} + \| (\Qsharp - {\Qsharp}') (t) \|_{L^1(T^1)}.
\endaligned
\ee
Here, $t_0 \in I$ is any chosen initial time and the implied constant is uniform on every compact interval of time and depends upon the BV norm of the solutions, while the notation $\Psharp$ (and similarly for the other functions) stands for 
$\Psharp(t,\theta) \coloneqq \int_0^\theta P_t(t,y) \, dy$. 

\eei  
\end{theorem}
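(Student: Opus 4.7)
The plan is to establish existence by smooth approximation from the $H^1$-$W^{1,1}$ theory of Theorem~\ref{33-theoG}, and to obtain uniqueness as a byproduct of the $L^1$-type stability estimate \eqref{eq:H494-BV}. The conservative form \eqref{eq:33-8-deux-more} is central to both steps, and the appearance of the $\theta$-primitives $\Psharp, \Qsharp$ in the distance $\dbf_{BV\text{reg}}$ is tailored to the BV setting in which $P_t, Q_t$ are only bounded measures, so their natural $L^1$-compatible controls act on their antiderivatives in $\theta$.

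For existence, I would mollify the initial data $(\barP, \barQ, \barP_0, \barQ_0)$ to obtain smooth approximants with uniformly bounded total variations converging in $L^1$ to the data, and invoke Theorem~\ref{33-theoG} to generate a sequence $(P^n, Q^n)$ of $H^1$-$W^{1,1}$ solutions. Uniform BV estimates for the sequence would then follow directly from \eqref{eq:33-8-deux-more}: since the quantities $t(P^n_t - e^{2P^n}Q^nQ^n_t)$ and $te^{2P^n}Q^n_t$ satisfy conservation laws, their total variation in $\theta$ is controlled by its value at $t_0$, provided $P^n$ stays bounded below so that $e^{\pm 2P^n}$ stays under control. Helly's selection theorem then extracts a subsequence converging in $L^\infty_t(BV_\theta)$; since BV embeds in $L^\infty$ in one dimension, $P^n$ converges uniformly and $e^{2P^n}$ converges strongly, which lets us identify the nonlinear products in the limit in Volpert's sense.

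For the stability estimate \eqref{eq:H494-BV}, I would subtract the two conservation laws \eqref{eq:33-8-deux-more} for $(P,Q)$ and $(P',Q')$ and decompose the cross differences such as $e^{2P}QQ_t - e^{2P'}Q'Q_t'$ using Volpert's chain rule in a three-term splitting isolating the factor that is smooth (exponential), the BV factor, and the measure factor. Integrating the resulting difference equation in $\theta$ from $0$ to a generic point transforms the time derivative of the measure $P_t - P'_t$ into a time derivative of the $L^1$ function $\Psharp - {\Psharp}'$ (and similarly for $\Qsharp - {\Qsharp}'$), with source terms that are products of a BV function times a measure. Combining with the $L^1$ balance laws for $P - P'$ and $Q - Q'$ and applying Gronwall's inequality in the distance $\dbf_{BV\text{reg}}$ yields \eqref{eq:H494-BV}, with constants depending on the BV norms of the two solutions.

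The main obstacle will be rigorously justifying Volpert's product calculus at both the approximation stage (showing that the classical products $e^{2P^n}Q^nQ^n_t$ converge to the Volpert product $e^{2P}QQ_t$, rather than to some other weak limit) and the stability stage (decomposing cross differences in a manner compatible with the jump-point averaging prescription of Volpert). A secondary difficulty, which explains the restriction to the interval $(\tmin, \tmax)$, is controlling the potential blow-up $P \to -\infty$: in that regime $e^{-2P}$ becomes large, the alternative formulation \eqref{eq:33-8-deux} degenerates, and the uniform BV bounds on the approximants can deteriorate; the theorem circumvents this by asserting existence only up to the blow-up time.
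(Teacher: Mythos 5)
Your overall strategy (mollify the data, solve via Theorem~\ref{33-theoG}, extract a BV limit, and prove an $L^1$ Gronwall-type stability bound) is not the route of the paper, and as written it contains a genuine gap at its central step. You assert that the uniform BV bound for the approximants ``follows directly'' from the conservative form \eqref{eq:33-8-deux-more}, because the conserved quantities $t(P^n_t - e^{2P^n}Q^nQ^n_t)$ and $te^{2P^n}Q^n_t$ satisfy conservation laws. Conservation form alone never controls the total variation of the solution of a \emph{system}: $\partial_t u + \partial_\theta f(u)=0$ gives decay of $\int u\,d\theta$-type quantities, not of $TV_\theta(u)$, and for genuinely coupled systems TV can grow without a wave-interaction or characteristic-structure argument. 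The mechanism that actually makes a BV bound available here is only visible after the further reduction carried out in the paper: rewriting the equations in the variables $(R,A,Q,B)$ (with $R=e^{-2P}+Q^2$ and $A,B$ the $\theta$-potentials of $t\Omega^{-2}R_t$, $t\Omega^{-2}Q_t$) yields the first-order \emph{nonconservative} system \eqref{eq-formu-84}, which has constant wave speeds $\pm1$ and is linearly degenerate. The paper then obtains the uniform BV bound from Glimm's scheme, and --- crucially for your other acknowledged obstacle --- identifies the Volpert products in the limit by invoking the LeFloch--Liu existence theorem for nonconservative systems, which exploits the \emph{pointwise} convergence of Glimm approximations. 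Your Helly extraction only gives weak-$*$ convergence of the measures $P^n_t, Q^n_t$, and a product of a strongly converging BV factor with a weak-$*$ converging measure is not stable in general; you name this as the ``main obstacle'' but propose no mechanism to resolve it, whereas in the paper this is exactly the point where the constant-speed, linearly degenerate structure and the Glimm/LeFloch--Liu machinery are indispensable.

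The stability estimate \eqref{eq:H494-BV} suffers from the same omission. Subtracting the two conservation laws, applying a Volpert chain-rule splitting, and running Gronwall on $\dbf_{BV\text{reg}}$ requires estimating differences of products of BV functions with measures, which is precisely the delicate part of the $L^1$ stability theory for nonconservative systems; the paper gets \eqref{eq:H494-BV} by revisiting that theory (cf.\ the $L^1$ stability argument in \cite{LeFloch-book}) for the constant-speed system \eqref{eq-formu-84}, where large total variation is admissible exactly because the characteristic speeds are constant and no wave interactions amplify the variation. Note also that the natural unknowns for the distance are the potentials $A,B$ of \eqref{eq-formu-84}, which correspond to your $\Psharp,\Qsharp$ after the change of variables --- so your intuition about why the primitives appear is correct, but the estimate itself needs the structural reduction you skipped. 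Finally, be aware that your mollification step must preserve uniform lower bounds on $\Omega^2=R-Q^2=e^{-2P}$ (equivalently upper bounds on $-P$), since the coefficient $\Omega^{-2}$ must stay bounded and bounded away from zero; this is the origin of the interval $(\tmin,\tmax)$ in part 1 of Theorem~\ref{theo-stableGowdy-44}, and it must be tracked at the level of the approximation scheme, not only in the limit.
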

 
Before proceeding with the analysis in the BV class, we transform the essential Gowdy equations to a new first-order form.
This allows us to make an interesting contact with the theory of first-order hyperbolic systems in nonconservative form.

Consider the system \eqref{eq:33-8-deux-more}. Recall our first-order formulation \eqref{eq:33-9} of these equations
in terms of $R \coloneqq e^{-2P} + Q^2$ and of $\Omega^2 \coloneqq R - Q^2 = e^{-2P} > 0$, 
\bel{eq:620}
\aligned
\big(t \Omega^{-2} R_t \big)_t - \big( t \Omega^{-2} R_\theta \big)_\theta = 0, 
\\
\bigg(t \Omega^{-2} Q_t \big)_t - \big( t \Omega^{-2} Q_\theta \big)_\theta = 0, 
\endaligned
\ee
together with the algebraic constraint $R-Q^2 > 0$. 

We observe that $\int_{T^1} t \Omega^{-2} R_t d\theta \eqqcolon a$ and $\int_{T^1} t \Omega^{-2} Q_t d\theta \eqqcolon b$ are constants in time
and we define the functions (that are not periodic, namely $\theta\in\RR$ rather than $\theta\in T^1$)
\be
\aligned
A & \coloneqq \int_0^\theta t \Omega^{-2} R_t \, d\theta + \int_{t_0}^t s\Omega^{-2} R_\theta(s,0)\,ds, \\
B & \coloneqq \int_0^\theta t \Omega^{-2} Q_t \, d\theta + \int_{t_0}^t s\Omega^{-2} Q_\theta(s,0)\,ds.
\endaligned
\ee
We find
\bel{eq-formu-84}
\aligned
t \Omega(R, Q)^{-2}  \, R_t - A_\theta & =0, 
\\
A_t - t \Omega(R, Q)^{-2} \, R_\theta &=0, 
\\
t \Omega(R, Q)^{-2} \, Q_t - B_\theta &=0, 
\\
B_t - t \Omega(R, Q)^{-2} \, Q_\theta &=0, \hskip1.cm \Omega^2 = R - Q^2>0.
\endaligned
\ee
The system \eqref{eq-formu-84} admits constant wave speeds $\pm 1$ (with double multiplicity), together with a full basis of eigenvectors, as is clear from
\be
\begin{pmatrix} 
R \\ A \\ Q \\ B
\end{pmatrix}_t 
- 
\begin{pmatrix} 
0 & t^{-1} \Omega^2  & 0  &0 
\\
t \Omega^{-2} & 0  & 0  &0 
\\
0  & 0  &0 & t^{-1} \Omega^2
\\
0 & 0  & t \Omega^{-2}  &0 
\\
\end{pmatrix} 
\begin{pmatrix} 
R \\ A \\ Q \\ B
\end{pmatrix}_\theta
= 
\begin{pmatrix} 
0 \\ 0 \\ 0 \\ 0
\end{pmatrix}. 
\ee
This first-order hyperbolic system is linearly degenerate in the sense of Lax \cite{Lax-1957, Lax-1971}. Furthermore, it is genuinely non-conservative in the sense that it cannot be transformed to a system of four conservation laws of the form $f(R,A,Q,B)_t + h(R,A,Q,B)_\theta =0$.

In \eqref{eq-formu-84}, it is necessary to restrict attention to the range of $R,Q$ for which the coefficients $\Omega(R, Q)^2$ and  $\Omega(R, Q)^{-2}$ remains bounded and non-vanishing. This requires us to guarantee, in our existence argument, that $R-Q^2$ remains bounded below away from zero. This is precisely the restriction on $P$ that we have in Theorem~\ref{theo-stableGowdy-44}.

We now proceed with the analysis of the Cauchy problem for the system \eqref{eq-formu-84}. As mentioned earlier, we use the notion of Volpert's product to define our notion of weak solution. 
We are interested in functions that, for all times, have bounded variation in space; such functions admit at most countably many points of jump discontinuity. Given two functions $u=u(t,\theta)$ and $v=v(t,\theta)$ and a smooth function $g=g(u)$, we define the product of the BV function $g(u)$ by the measure $\del_\theta v$ (and similarly for $\del_t v$) as
the measure $\widehat{g(u)} \del_\theta v$
obtained by extending the definition of $g(u)$, for each time $t$, to a function defined at \textsl{every point} $\theta$ and specifically we set  
\bel{eq:V2}
\widehat{g(u)} (t,\theta) 
 \coloneqq \int_0^1 g\big( z \, u_- (t,\theta) + (1-z) u_+(t,\theta) \big) \, dz, 
\ee
in which $u_\pm(t,\theta)$ denote the left- and right-hand limit of the function $\theta \mapsto u(t, \theta)$. 

Let us state our definition of weak solution.  

\begin{definition} 
\label{def-666}
A set of functions $R,Q,A,B$  with locally bounded variation in space (defined for $t \in I \subset (0,\infty)$ and $\theta \in T^2 \simeq [0,1]$) 
is called a \emph{weak solution with bounded variation} to the first-order version of 
the essential Gowdy equations if \eqref{eq-formu-84} hold as equalities between locally bounded measures, in which the products $\widehat{{\Omega(R, Q)^{-2}}}  \, R_t$, $\widehat{\Omega(R, Q)^{-2}}  \, R_\theta$, 
$\widehat{\Omega(R, Q)^{-2}}  \, Q_t$, and $\widehat{\Omega(R, Q)^{-2}}  \, Q_\theta$, 
are understood as Volpert's products. 
\end{definition}

Then, we establish Theorem~\ref{theo-stableGowdy-44} as a direct corollary of Glimm's theorem \cite{Glimm-1965} 
(providing uniform BV bound for a sequence of approximate solutions to the Cauchy problem) 
and P. LeFloch--T.-P.~Liu's theorem \cite{LeFlochLiu}. The latter uses in an essential way the 
property of pointwise convergence enjoyed by the Glimm scheme and proves that Glimm's approximate solutions converge 
strongly to a weak solution. We observe that these standard arguments assume that the initial data have sufficiently small total variation. However, since our system has constant wave speed, it is not difficult to revisit the total variation estimate and $L^1$ stability argument (for instance, presented in \cite{LeFloch-book} )
in order to establish these properties without any restriction on the size of the total variation of the initial data (and solutions). 


\appendix 

\section{Derivation of the Euler-Gowdy system} 
\label{sec:7}

\subsection*{The Gowdy solutions}  
 
Let us discuss first vacuum solutions \cite{Gowdy-1974,Chrusciel-1990}. 
We are interested in spacetimes $(M,g)$ with $T^2$ symmetry on $T^3$, that is, 
$(3+1)$-dimensional Lorentzian manifolds with topology $I \times T^3$ (where $I$ is an interval) admitting the Lie group $T^2$ as an isometry group acting on the spatial leaves $T^3$. In other words, there exist two linearly independent vector fields with closed orbits $X,Y$ satisfying: 
\be
\aligned 
& \text{Commuting property:}    &&[X,Y] = 0, 
\\
& \text{Spacelike property:}    &&g(X,X) > 0 \text{ and }   g(Y,Y) > 0, 
\\
& \text{Killing conditions:}    && \Lcal_X g = \Lcal_Y g = 0.
\endaligned
\ee
Einstein vacuum equations imply that the so-called twists\footnote{Here and below, all Greek indices $\alpha, \beta$ vary from $0$ to $3$.}
$
\Theta_X \coloneqq \eps_{\alpha\beta\gamma\delta} X^\alpha Y^\beta \nabla^\gamma X^\delta
$
and $
\Theta_Y \coloneqq \eps_{\alpha\beta\gamma\delta} X^\alpha Y^\beta \nabla^\gamma Y^\delta
$
are \textsl{constants}. By making a suitable linear combination of the given Killing fields, one can always normalize  
one of twists to vanish identically, say $\Theta_Y=0$. 
Gowdy \cite{Gowdy-1974} first studied such spacetimes under the additional requirement that $X, Y$ have vanishing twist constants. This latter condition can be interpreted as saying that the $T^2$-action is orthogonally transitive in the sense that the distribution of $2$-planes of covectors 
$\big( \text{Vect}(X,Y)\big)^\perp$  
is Frobenius integrable. This completes the description of the class of Gowdy spacetimes. 

The equivalence above is easily checked as follows. 
Choose two vectors $Z,T$ orthogonal to $X,Y$ normalized such that
$
\eps^{\alpha\beta\gamma\delta} X_\gamma Y_\delta = Z^\alpha T^\beta - T^\alpha Z^\beta, 
$
where $\eps$ is the canonical volume form $|g|^{1/2} dx^0 \wedge \ldots \wedge dx^n$ in coordinates.   
Then, the distribution of covectors $g(X, \cdot), \, g(Y, \cdot)$ is Frobenius integrable if and only if 
$Z,T, [Z,T]$ are linearly dependent,
that is 
$\eps_{\alpha\beta\gamma\delta} Z^\alpha T^\beta [Z,T]^\gamma =0$.
This is indeed equivalent to the vanishing twist condition $\Theta_X= \Theta_Y =0$. 

We always tacitly assume that the spacetimes under consideration are not flat, that is, do not have identically vanishing curvature. We then express the metric $g$ in the so-called areal coordinates denoted by $(t,\theta,x^2,x^3)$, i.e., the time function $t$ is chosen to coincide with the area $R$ of the orbits of $T^2$-symmetry. 
The existence of this time function is guaranteed (locally at least) by observing that the vector $\nabla R$ is necessarily timelike,  so that the slices of constant time $t=R$ are spacelike hypersurfaces. 
More precisely, this is true and this choice of coordinates is possible if only if the spacetime is not flat, as assumed above. The metric in areal coordinates reads 
\bel{33-metric-g} 
g = e^{\lambda/2} t^{-1/2} \, (- dt^2 + d\theta^2) + t e^P (dx^2 + Q \, dx^3)^2 + e^{-P} \, t \, (dx^3)^2 
\ee
and 
is determined by three metric coefficients $P,Q, \lambda$, which depend on the two independent variables $t,\theta$. The time variable $t$ belongs to an interval $I \subset \RR$ and the variables $\theta,x^2,x^3$ range over the torus $T^1 \simeq [0,1]$ (with periodic boundary conditions).
The coordinates $t, \theta$ parameterize the quotient manifold $M/T^2$. The vectors  
$\del/\del x^2$ and $\del/ \del x^3$ are the Killing fields of the spacetime
and $x^2, x^3$ are coordinates on the toroidal orbits of symmetry $T^2$. 
By construction, the square of the area of the two-dimensional spacelike orbits of symmetry is 
$$
\mbox{det} \left( \begin{matrix}
t e^P & t e^P Q
\\
t e^P Q & t e^P Q^2 + t e^{-P}
\end{matrix}\right)
= t^2.
$$


\subsection*{The Euler-Gowdy system}

Consider next the evolution of a compressible fluid in Gowdy symmetry \cite{LeFlochRendall-2011}. We are interested in four-dimensional Lorentzian manifolds $(M, g)$ satisfying the Einstein equations $G_{\alpha\beta} = T_{\alpha\beta}$, 
where $T_{\alpha\beta}$ denotes the stress-energy tensor of the fluid and $G_{\alpha \beta} \coloneqq R_{\alpha\beta} - (R/2) g_{\alpha\beta}$ denotes the Einstein curvature tensor describing the geometry of the spacetime. Here, $R_{\alpha\beta}$ and $R$ denote the Ricci and scalar curvatures, which must for our purposes be understood in a weak sense~\cite{LeFlochMardare-2007}. 

The energy-momentum tensor of a perfect fluid is
$T_{\alpha\beta} = (\mu + p) \, u_\alpha u_\beta + p \, g_{\alpha\beta}$,
where $\mu \geq 0$ is the rest mass-energy density of the fluid and $u^\alpha$ denotes 
its unit, timelike velocity vector. We assume the linear equation of state 
$p=k^2 \mu$, where $k \in [0,1]$ represents the sound speed in the fluid and does not exceed the speed of light (normalized to unity). The Bianchi identities for the geometry imply  
$\nabla_\alpha T_\beta^\alpha=0$, which are the Euler equations describing the evolution of the fluid. 

We again assume $T^2$ symmetry on $T^3$. 
We recall that, while the ``twist constants'' are indeed constant for $T^2$-symmetric vacuum spacetimes, they are no longer 
constant in $T^2$-symmetric matter spacetimes, but satisfy evolution equations of their own.
Our interest here is in Gowdy-symmetric spacetimes which, by definition, have vanishing ``twist constants''.

In order to formulate the initial value problem for the Euler-Gowdy system, we prescribe an initial data set on a spacelike hypersurface $\Hcal$ whose topology is the 3-torus $T^3$. The initial data are said to be Gowdy-symmetric if they are  invariant under the action of the Lie group $T^2$ and have vanishing twist constants; cf.~the previous subsection.  
We consider spacetimes $(M, g)$ that admit a foliation by a time function $t: M \to  I \in \R$, where $I$ is an interval. More precisely, $M = \bigcup_{t \in I} \Hcal_t$, 
where each $\Hcal_t$ is a compact spacelike Cauchy hypersurface diffeomorphic to the initial hypersurface $\Hcal$ and $g^{\alpha\beta} \del_\alpha t$ is a future-oriented timelike vector field.  In Gowdy symmetry, it is natural to foliate the spacetime by the area function. Its gradient $\nabla t$ is a timelike vector field, so this is always possible as already pointed out in the previous subsection, except if the spacetime is vacuum and flat. 
Either $\nabla t$ or $-\nabla t$ can be chosen to determine the time-orientation of $M$. Since discontinuous solutions of Euler equations are time-irreversible, the spacetime is (uniquely) defined only in the \textsl{future} of the initial hypersurface. 

In areal coordinates, the metric reads 
\be
g = e^{2(\eta-U)}(-a^2d t^2 + d\theta^2) +e^{2U}(dx + Ady)^2+e^{-2U}t^2dy^2, 
\ee
where the four metric coefficients $a,\eta, U, A$ (with $a> 0$) depend upon the time variable $t$ and the space variable $\theta \in T^1\simeq [0,1]$ (with periodic boundary conditions). 

Einstein's constraint equations in Gowdy symmetry imply that the velocity field~$u$ is orthogonal to the two Killing fields,
namely $u=u^0\del_t+u^1\del_\theta$. Since the velocity is normalized by $g(u,u)=-1$, we parametrize it with a scalar $v$ such that
\be
u^0 = {1 \over a} e^{-\eta + U} \big( 1 - v^2 \big)^{-1/2}, 
\qquad 
u^1 = v \, e^{-\eta + U} \big( 1 - v^2 \big)^{-1/2}. 
\ee
It is convenient also to define $\nu$ and $\rho$ as rescalings of $\eta$ and $\mu$:
\be
\nu \coloneqq \eta + \ln a, \qquad 
\rho\coloneqq e^{2(\nu-U)}\mu = a^2 e^{2(\eta-U)} \mu.
\ee
After some tedious calculations \cite{LeFlochRendall-2011} (see also the summary given in \cite[Section 2.1]{GrubicLeFloch-2015}), we find 
that $\nabla^\alpha T_{\alpha\beta} =0$ gives the two Euler equations \eqref{fluid-II} while the evolution and constraint equations for the geometry are \eqref{evolution-I}--\eqref{constraint-III}.


\paragraph*{Acknowledgments.}
Both authors gratefully acknowledge support from the Simons Center for Geometry and Physics, Stony Brook University, at which some of the research for this paper was performed.
The second author (PLF) was partially supported by the Innovative Training Network (ITN) grant 642768 (entitled ModCompShock) and by the Centre National de la Recherche Scientifique (CNRS). This paper was completed when the second author was a visiting fellow at the Courant Institute for Mathematical Sciences, New York University.  
 


\begin{thebibliography}{99}

\bibitem{Andreasson-1999} 
\auth{Andr\'easson H.,}
Global foliations of matter spacetimes with Gowdy symmetry, 
\emph{Commun. Math. Phys.} 206 (1999), 337--366. 

\bibitem{Barnes-2004} 
\auth{Barnes A.P., LeFloch P.G., Schmidt B.G., and Stewart J.M.,}
The Glimm scheme for perfect fluids on plane-symmetric Gowdy spacetimes, 
\emph{Class. Quantum Grav.} 21 (2004), 5043--5074.

\bibitem{Christo-1992} 
\auth{Christodoulou D.,}
Bounded variation solutions of the spherically symmetric Einstein-scalar field equations,  
\emph{Comm. Pure Appl. Math.} 46 (1992), 1131--1220.

\bibitem{Christo-1995} 
\auth{Christodoulou D.,}
Self-gravitating relativistic fluids: a two-phase model,
\emph{Arch. Rational Mech. Anal.} 130 (1995), 343--400. 

\bibitem{Chrusciel-1990}
\auth{Chru\'sciel P.T.,} 
On spacetimes with $U(1) \times U(1)$-symmetric compact Cauchy surfaces,
\emph{Ann. Phys.} 202 (1990), 100--150.

\bibitem{Dafermos-1979} 
\auth{Dafermos C.M.,} 
Hyperbolic balance laws in continuum physics, 
in ``Nonlinear problems in theoretical physics'' (Proc. IX G.I.F.T. Internat. Sem. Theoret. Phys., 
Univ. Zaragoza, Jaca, 1978), 
Lecture Notes in Phys., 98, Springer, Berlin-New York, 1979, pp.~107--121.

\bibitem{Dafermos-book} 
\auth{Dafermos C.M.,}
\textsl{Hyperbolic conservation laws in continuum physics,}
Grundlehren der Mathematischen Wissenschaften, Vol.~325, Fourth edition, 
Springer Verlag, Berlin, 2016. 

\bibitem{EG} 
\auth{Evans L.C. and Gariepy R..F,}
\emph{Measure theory and fine properties of functions,}
C.R.C Press, Chapman and Hall/CRC, 2015. 

\bibitem{Glimm-1965} 
\auth{Glimm J.,}  
Solutions in the large for nonlinear hyperbolic systems of equations, 
\emph{Comm. Pure Appl. Math.} 18 (1965), 697--715.

\bibitem{Gowdy-1974} 
\auth{Gowdy R.,} 
Vacuum spacetimes with two-parameter spacelike isometry groups and compact invariant hypersurfaces: 
topologies and boundary conditions, 
\emph{Ann. Phys.} 83 (1974), 203--241. 

\bibitem{GrubicLeFloch-2013} 
\auth{Grubic N. and LeFloch P.G.,}  
Weakly regular Euler-Gowdy spacetimes with Gowdy symmetry. The global areal foliation, 
\emph{Arch. Rational Mech. Anal.} 2008 (2013), 391--428. 

\bibitem{GrubicLeFloch-2015}  
\auth{Grubic N. and LeFloch P.G.,}    
On the area of the symmetry orbits in weakly regular Euler-Gowdy spacetimes with Gowdy symmetry, 
\emph{SIAM J. Math. Anal.} 47 (2015), 669--683.

\bibitem{Lax-1957} 
\auth{Lax P.D.,}
Hyperbolic systems of conservation laws. II, 
\emph{Comm. Pure Appl. Math.} 10 (1957), 537--566.

\bibitem{Lax-1971} 
\auth{Lax P.D.,}
Shock waves and entropy, 
in ``Contributions to Nonlinear Functional Analysis'', ed.~E. Zarantonello, 
Academic Press, New York, 1971, pp.~603--634.

\bibitem{LeFlochLeFloch-2} 
\auth{Le Floch B. and LeFloch P.G.,} 
On the global evolution of self-gravitating matter.
Nonlinear interactions in $T^2$ symmetry, Preprint, 2018.  

\bibitem{LeFlochLeFloch-3} 
\auth{Le Floch B. and LeFloch P.G.,} 
On the global evolution of self-gravitating matter.
Nonlinear interactions in spherical symmetry, in preparation.  

\bibitem{LeFloch-book} 
\auth{LeFloch P.G.,} 
\textsl{Hyperbolic systems of conservation laws. The theory of classical and nonclassical shock waves,} 
Lectures in Mathematics, ETH Z\"urich, Birkh\"auser, 2002. 

\bibitem{LeFlochLiu}
\auth{LeFloch P.G. and Liu T.-P.,} 
Existence theory for nonlinear hyperbolic systems in nonconservative form, 
\emph{Forum Math.} 5 (1993), 261--280.

\bibitem{LeFlochMardare-2007} 
\auth{LeFloch P.G. and Mardare C.,}
Definition and weak stability of spacetimes with distributional curvature, 
\emph{Portugal Math.} 64 (2007), 535--573.

\bibitem{LeFlochRendall-2011} 
\auth{LeFloch P.G. and Rendall A.D.,}
A global foliation of Euler-Gowdy spacetimes with Gowdy-symmetry on $T^3$, 
\emph{Arch. Rational Mech. Anal.} 201 (2011), 841--870.  

\bibitem{LeFlochSmulevici-2016}  
\auth{LeFloch P.G. and Smulevici J.,}  
Weakly regular T2-symmetric spacetimes. The future causal geometry of Gowdy spaces, 
\emph{J. Differ. Equa. 260} (2016), 1496--1521.  

\bibitem{LeFlochStewart-2011} 
\auth{LeFloch P.G. and Stewart J.M.,} 
The characteristic initial value problem for plane-symmetric spacetimes with weak regularity, 
\emph{Class. Quantum Grav.} 28 (2011), 145019--145035. 
 
\bibitem{MakinoUkai-1995} 
\auth{Makino T. and Ukai S.,} 
Local smooth solutions of the relativistic Euler equation,
\emph{J. Math. Kyoto Univ.} 35 (1995), 105--114.

\bibitem{Moncrief-1981} 
\auth{Moncrief V.,}
Global properties of Gowdy spacetimes with T3xR topology, 
\emph{Ann. Phys. 132} (1981), 87--107. 

\bibitem{Rendall-book} 
\auth{Rendall A.D.,} 
\textsl{Partial differential equations in general relativity,}
Oxford University Press, Oxford, 2008.

\bibitem{Ringstrom-2004} 
\auth{Ringstrom H.,}  
On a wave map equation arising in general relativity, 
\emph{Comm. Pure Appl. Math.} 57 (2004), 657--703.

\bibitem{SmollerTemple-1993} 
\auth{Smoller J. and Temple B.,} 
 Global solutions of the relativistic Euler equations,
\emph{Comm. Math. Phys.} 156 (1993), 67–-99. 

\bibitem{Volpert-1967}
\auth{Volpert A. I.,}
The space BV and quasilinear equations, 
\emph{USSR Sbornik} 2 (1967), 225--267.

\end{thebibliography}
\end{document}